\tikzset{>=latex} 
\colorlet{myred}{red!85!black}
\colorlet{myblue}{blue!80!black}
\colorlet{mydarkred}{myred!80!black}
\colorlet{mydarkblue}{myblue!60!black}
\tikzstyle{xline}=[myblue,thick]
\tikzstyle{yline}=[red,thick]
\tikzstyle{anotline}=[thin,->]
\def\tick#1#2{\draw[thick] (#1) ++ (#2:0.09) --++ (#2-180:0.18)}
\tikzstyle{myarr}=[myblue!50,-{Latex[length=3,width=2]}]
\newtheorem{thm}{Theorem}
\newtheorem{defi}{Definition}
\newtheorem{lem}[thm]{Lemma}
\newtheorem{pro}{Proposition}
\newtheorem{exmp}{Example}
\DeclareMathOperator*{\argmax}{arg\,max}
\newcommand{\rev}{\mathsf{rev}}
\newcommand*\diff{\mathop{}\!\mathrm{d}}
\newcommand{\myerson}{\widetilde{p}}
\newcommand{\mmax}{\mathsf{max}}
\begin{document}
\title{Auction Design for Bidders with Ex Post ROI Constraints}

\author{Hongtao Lv\inst{1,3} \and
Xiaohui Bei\inst{2} \and  
Zhenzhe Zheng
\inst{3}\thanks{Z. Zheng is the corresponding author.} \and
Fan Wu
\inst{3}}
\authorrunning{H. Lv et al.}
%
\institute{School of Software \& Joint SDU-NTU Centre for Artificial Intelligence Research (C-FAIR), Shandong University, Jinan, China \\
\email{lht@sdu.edu.cn}
\and School of Physical and Mathematical Sciences,
Nanyang Technological University, Singapore, Singapore\\
\email{xhbei@ntu.edu.sg}
\and
Department of Computer Science and Engineering, Shanghai Jiao Tong University, Shanghai, China\\
\email{\{zhengzhenzhe, wu-fan\}@sjtu.edu.cn}
}

\maketitle              
\begin{abstract}
Motivated by practical constraints in online advertising, we investigate single-parameter auction design for bidders with constraints on their Return On Investment (ROI) -- a targeted minimum ratio between the obtained value and the payment. We focus on \emph{ex post} ROI constraints, which require the ROI condition to be satisfied for every realized value profile. 
With ROI-constrained bidders, we first provide a full characterization of the allocation and payment rules of dominant-strategy incentive compatible (DSIC) auctions. In particular, we show that given any monotone allocation rule, the corresponding DSIC payment should be the Myerson payment with a \emph{rebate} for each bidder to meet their ROI constraints.
Furthermore, we also determine the optimal auction structure when the item is sold to a single bidder under a mild regularity condition. This structure entails a randomized allocation scheme and a first-price payment rule, which differs from the deterministic Myerson auction and previous works on ex ante ROI constraints. 

\keywords{Return on investment (ROI)  \and Mechanism design \and Myerson auction.}
\end{abstract}

\section{Introduction}\label{sec:intro}
\setcounter{footnote}{0}

Online advertising auctions are a vital source of revenue for many IT companies, generating billions of dollars of revenue annually.
In recent years, with tens of millions of ad auctions being conducted in real-time each day, this large-scale and complex market has prompted modern online advertising platforms to develop auto-bidding services, which allow the advertisers to set up high-level marketing goals for their ad campaigns and then bid on behalf of the advertisers.

In these auto-bidding scenarios, advertisers' financial constraints such as budget and return on investment (ROI) constraints have become critical in auction design. While auctions for budget-constrained bidders have been extensively studied in the literature~\cite{malakhov2008optimal, borgs2005multi, laffont1996optimal}, research on auction design for bidders with ROI constraints is still in its nascent stage.
The ROI constraints of advertisers require that the payment cannot be more than a certain fraction of the obtained advertising value. In other words, there is a targeted minimum ratio between the obtained value and the payment for an ROI-constrained bidder. Unlike budget constraints which set a hard limit on payment, ROI constraints establish a payment limit that is linearly related to the allocated value. Previous studies~\cite{golrezaei2021auction, auerbach2008empirical} have demonstrated that ROI constraints align better with real-world empirical evidence than budget constraints, and it is the aim of this paper to explore how to design auctions with good incentive and revenue guarantees for ROI-constrained bidders.

The existing literature on auction design for ROI-constrained bidders primarily focuses on \emph{ex ante} ROI constraints, which requires an \emph{expected} ROI with respect to the prior value distributions of bidders~\cite{golrezaei2021auction, balseiro2021landscape}.  This approach is suitable for advertisers who participate in a large number of auctions daily and are only concerned with their average spend per unit of value.
However, in reality, most ad campaigns experience the ``long-tail phenomenon''~\cite{brynjolfsson2011goodbye}, which means they only receive dozens of or fewer clicks per day. Under these conditions, an auction with ex ante ROI guarantees may have a non-negligible probability of violating the ROI constraints of these ad campaigns over a day. Due to these reasons, in this work, we focus on the \emph{ex post} or \emph{hard} ROI constraints, which ensures that the auction respects the ROI constraints of bidders for any realized value profile. This is a stronger requirement compared to ex ante ROI constraints and addresses the limitation of current auction design methods.

\subsection{Our Results} 
In this work, we examine the design of truthful and optimal auction design for ex post ROI-constrained bidders.  We inherit the setting from the classic single-parameter mechanism design and consider the values of bidders as private information and the targeted ROIs as public information. In this single-parameter environment, the ROI constraints can be integrated into the objective function (see Section~\ref{sec:prel} for details), resulting in a transformed utility model: $u_i = M_i v_i x_i - p_i$, where $u_i$ represents the utility of bidder $i$, $v_i$ is the value, $x_i$ is the allocation quantity, and $p_i$ is the payment. Here $M_i>1$ is the targeted ROI ratio, which differentiates this model from the classical quasilinear utility model. 

We first study the characterizations of truthful auctions with ROI-constrained bidders. Compared to Myerson's characterization of truthful auctions in the single-parameter environment, we show that the monotonicity requirement of the allocation rule remains true for ex post ROI-constrained bidders, but the unique payment rule in~\cite{myerson1981optimal} should be modified by subtracting a $\max$ term, which can be interpreted as a ``rebate'' equal to the largest ``violation'' of the Myerson payment to the ROI constraint for all lower valuations. 
This is a full characterization that completely describes all truthful auctions with ROI-constrained bidders. This result can be proved using similar techniques from Myerson's analysis. It can also be derived from the following alternative interpretation of the payment rule: note that the ROI-constrained bidder assigns a weight $M_i > 1$ to her obtained value $v_ix_i$ from the allocation, but not to her payment. To not violate the individual rationality (IR) condition, instead of applying a naive approach that charges the bidder $M_i$ times the Myerson payment, we must iteratively apply the Myerson payment increment (multiplied by $M_i$) in small intervals and truncate the payment at the obtained value whenever necessary.

Next, we turn our focus to the optimal (\emph{i.e.} revenue-maximizing) auction design. The additional $\max$ term in our payment rule poses a significant challenge to the optimal auction design, since it is unclear how this term can be incorporated into a modified virtual valuation function as seen in previous literature. Instead, we concentrate on the case of selling a single item to a single bidder. Our main result suggests that under a mild regularity assumption known as \emph{decreasing marginal revenue} (DMR)\footnote{DMR requires the marginal revenue, $vf(v)+F(v)-1$ to be non-decreasing in the \emph{value space}. This is different from the usual definition of regularity which requires the same monotonicity but in the \emph{quantile space}. Please see the related work section for a more detailed discussion of their differences and more related works.}, the optimal auction for selling to a single ex post ROI-constrained bidder employs a randomized scheme.
More specifically, the allocation rule $x(\cdot)$ starts with a \emph{first-price} interval, where the payment always matches the obtained value, until it reaches the highest allocation and $x(\cdot)$ becomes constant thereafter. This finding is in contrast to the classic Myerson auction~\cite{myerson1981optimal} and previous results for bidders with ex ante ROI constraints, where the optimal auctions are always deterministic. It implies that similar to much literature on optimal mechanism design for multi-parameter settings, a slight generalization, such as the inclusion of the $M_i$ term, in the single-parameter setting can lead to randomized optimal auctions.


\subsection{Related work}
There are two main threads of studies of auctions with ROI-constrained bidders. The first thread investigates how the bidding strategies of the bidders are affected by the ROI constraints in classic VCG or generalized second price (GSP) auctions~\cite{szymanski2006impact, borgs2007dynamics, golrezaei2021bidding, tillberg2020optimal,  aggarwal2019autobidding, heymann2019cost, babaioff2021non}.
The second thread, which our paper follows, focuses on the \emph{design} of auctions with ROI-constrained bidders, which is of practical interest to many online advertising platforms~\cite{golrezaei2021auction, balseiro2021landscape, golrezaei2021bidding, li2020incentive, cavallo2017sponsored, wilkens2017gsp, li2022auto, balseiro2019black, mehta2022auction}.  In this line of study, the most related work to ours is \cite{golrezaei2021auction}, which showed empirically that a fraction of the buyers in online advertising are indeed ROI-constrained. They also took the first step towards revenue-maximizing auction design for bidders with \emph{ex ante} ROI constraints.
Note that ex ante ROI constraints only require the ROI conditions to be met in expectation and are strictly weaker than ex post ROI constraints.
Another recent work \cite{balseiro2021landscape} considered the scenario where either the value or the ROI constraint is private information of the bidder. They used a similar utility function as ours, but still focus on the concept of ex ante ROI. Unlike these works, we concentrate on \emph{ex post} ROI constraints, which provide a hard ROI guarantee for bidders in every possible value realization. 
In~\cite{deng2021towards, balseiro2021robust}, 
the authors
considered  ex post ROI constraints in multiple stages, and assumed that each bidder maintains a fixed bid multiplier among stages, which leads to completely different problems from ours. 

One particular line of research focuses on requirements of truthfulness for ex post ROI constraints. Cavallo \emph{et al.} studied the same utility function as ours in~\cite[Appendix A]{cavallo2017sponsored}, and investigated the corresponding payment rules. 
The main difference is that, they limited their focus on \emph{deterministic} mechanisms for bidders with \emph{identical} ROI constraints, while we consider a more general single-parameter setting in the randomized mechanism domain.   
Li \emph{et al.}~\cite{li2020incentive} proposed a condition on truthfulness of the ROI information, based on which they provided a mechanism framework using tools from control theory. They took the ROI constraints as private information, instead of the value, which leads to a substantially different problem from ours. 

The DMR assumption used in our optimal auction characterization has been widely discussed in the literature. It means that the function  $\psi(v) \triangleq  v f(v) + F(v) - 1$ is non-decreasing, or equivalently $v\cdot (1 - F(v))$, which is the expected revenue of selling the item at price $p$, is concave, and this is where the name of this condition comes from. Intuitively, many commonly used distribution functions 
satisfy this assumption, \emph{e.g.}, uniform distributions, and any
distribution of finite support and monotone non-decreasing density. 
The DMR condition was first proposed in~\cite{che1998standard} for bidders with budget constraints. In~\cite{fiat2016fedex}, the authors found that the DMR condition is more natural in their setting than the traditionally used notion of \emph{regularity}~\cite{myerson1981optimal}, since DMR precisely removes the requirement of ironing in the \emph{value} space, instead of in the \emph{quantile} space as in~\cite{myerson1981optimal}. In~\cite{devanur2017optimal2}, DMR was discussed comprehensively, and the authors showed that the optimal mechanism is deterministic under the DMR condition in  a multi-unit setting with private demands. We refer the reader to their work for concrete examples and more discussion.


\section{Preliminaries}
\label{sec:prel}
We consider a general single-parameter auction environment, which consists of a seller and $n$ bidders $ \boldsymbol{N}=\{1,2,...,n\}$. Each bidder $i$ has a private valuation $t_i$ per	unit of	the	good. We represent $x_i$ as the quantity of the allocated good to bidder $i$ and $p_i$ as the payment of bidder $i$. Without loss of generality, we assume the maximum possible allocation is ${x}_i^{\mmax}= 1$ and the good is indivisible, that is, $x_i$ denotes the probability of bidder $i$ receiving the good.
Besides the allocated value, each bidder also has a return on investment (ROI) constraint $M_i$, as public information\footnote{This setting is practical and prevalent in practice, \emph{e.g.}, in online advertising, the targeted ROI typically remains the same over a certain period.},
which specifies the minimum targeted ratio between her obtained value and the payment. We assume $1< M_i< +\infty$ in this work. We note that the ROI constraint is considered in an ex post measure, \emph{i.e.}, it requires that $\frac{t_i x_i}{p_i}\geq M_i$ strictly holds in the outcome of every auction instance. Note that the same model is also adopted in \cite[appendix A]{cavallo2017sponsored}.

With the above definitions, the utility of bidder $i$ is given by
\begin{equation}
\label{utility-pre}
    u_i=\left\{
    \begin{array}{lr}
    t_i x_i - p_i  & \quad \text{if } \frac{t_i x_i}{p_i}\geq M_i \\
    -\infty & \quad \textrm{otherwise}.
    \end{array}
\right.
\end{equation}
It is worth noting that this is the standard quasilinear utility model with the addition of the ROI constraint.
We can further define 
$$v_i = \frac{t_i}{M_i},$$
which could be interpreted as the maximum willingness-to-pay of the bidder $i$ per unit of the good. Then, we can rewrite the utility function as 
\begin{equation}
\label{utility}
    u_i=\left\{
    \begin{array}{lr}
    M_i v_i x_i - p_i  & \quad \text{if } \, v_i x_i\geq p_i \\
    -\infty & \quad \textrm{otherwise}.
    \end{array}
\right.
\end{equation}
One can observe that, as $M_i$ is a public constant, $v_i$ and $t_i$ are completely interchangeable.
To avoid confusion, we use the term \emph{value} to  represent $v_i$, and \emph{initial value} to represent $t_i$ in the following discussion. Each value $v_i$ is independently drawn from a probability distribution $F_i: [0,v_{\mmax}]\rightarrow [0,1]$, with a continuous probability density function $f_i$. While the distributions $F_i$'s are common knowledge, the exact value $v_i$ is known only to the bidder $i$. We denote $\textbf{v}$ as the value profile of all bidders, and $\textbf{v}_{-i}$ as that of all bidders except bidder $i$. 

In an auction, each bidder reports her value as $b_i$, which is not necessarily equal to $v_i$. We define $\textbf{b}$ and $\textbf{b}_{-i}$ similarly as the notations of $\textbf{v}$ and $\textbf{v}_{-i}$. 
Based on the reported bids, an auction mechanism consists of an allocation rule $x_i(b_i, \textbf{b}_{-i})$, mapping the bid profile to the allocated quantity to each bidder $i$, and a payment rule $p_i(b_i, \textbf{b}_{-i})$, mapping the bid profile to the payment for each bidder. When clear from context, we will omit $\textbf{b}_{-i}$ in the mappings. We also use $u_i(b_i, v_i)$ to represent the utility of bidder $i$ who has value $v_i$ and bid $b_i$.
In the following discussion, we assume that the allocation rule $x_i(\cdot)$ is always right-differentiable, and there are finite non-differentiable points. When $x_i(\cdot)$ is non-continuous at $v$, let $x_i(v) = \lim_{z\rightarrow v^+} x_i(z)$.

We are interested in auctions that are \emph{dominant-strategy incentive compatible} (DSIC) and \emph{individually rational} (IR).
\begin{defi}[Dominant-Strategy Incentive Compatibility, DSIC]
A mechanism is dominant-strategy incentive compatible if and only if
$$u_i(v_i, v_i)\geq u_i(b_i, v_i), \quad \forall b_i, \mathbf{b}_{-i}, i\in  \boldsymbol{N}.$$
\end{defi}
\begin{defi}[Individual Rationality, IR]
A mechanism is individually rational if and only if 
$$u_i(v_i, v_i)\geq 0,\quad \forall v_i, \mathbf{b}_{-i}, i\in  \boldsymbol{N}.$$
\end{defi}
For ease of notation, we use \emph{truthfulness} to represent the properties of both DSIC and IR in the following sections.
In addition, for truthful auctions, we do not distinguish $v_i$ and $b_i$ hereinafter.

The revenue of a truthful auction is defined as
$$\rev =\mathbb{E}_{\textbf{v}}\left[\sum_{i\in  \boldsymbol{N}}  p_i(v_i) \right].$$
The aim of this work is to characterize both truthful and revenue-maximizing (optimal) auctions with ex post ROI-constrained bidders.

\section{Characterize the Structure of DSIC Auctions}
\label{sec3}
In this section, we present characterizations of the DSIC auctions with ex post ROI constraints.
These results generalize the classical Myerson's Lemma~\cite{myerson1981optimal} for the traditional utility model (\emph{i.e.}, $M_i=1$), which states that 
in the single-parameter environment, a mechanism is DSIC if and only if its allocation rule is monotone and the payment scheme follows a unique rule.

\begin{lem}[Myerson’s Lemma~\cite{myerson1981optimal}]
\label{Myerson}
For traditional bidders with $M_i=1$, a single-parameter mechanism is DSIC if and only if:
\begin{itemize}
    \item {[Monotone Allocation Rule]} the allocation rule is monotonically non-decreasing, \emph{i.e.}, $x_i(v)\leq x_i(v')$ for all $v<v'$ and bidder $i$;
    \item {[Unique Payment Rule]} for each monotonically non-decreasing allocation rule $x_i(\cdot)$, and $p_i(0)=0$, the payments are given by 
    \begin{equation}
    \label{eq:myerson}
     p_i(v)=vx_i(v)- \int_0^{v}x_i(z)\diff z.
    \end{equation}
\end{itemize}
\end{lem}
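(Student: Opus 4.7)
The plan is to prove both the necessary and sufficient directions of the characterization using the standard exchange/envelope argument from classical mechanism design.

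For necessity, I would first establish monotonicity by pairing the two DSIC inequalities at any pair of values $v<v'$: a truthful bidder with value $v$ weakly prefers reporting $v$ over $v'$, giving $vx_i(v)-p_i(v)\ge vx_i(v')-p_i(v')$, and the symmetric inequality must also hold at $v'$. Adding these cancels the payments and yields $(v'-v)(x_i(v')-x_i(v))\ge 0$, so $x_i(\cdot)$ must be non-decreasing. Next, for the payment formula, I would define the on-path utility $U_i(v)=vx_i(v)-p_i(v)$. DSIC means $U_i(v)=\max_{v'}\{vx_i(v')-p_i(v')\}$, a pointwise maximum of affine functions in $v$, hence convex with right-derivative equal to $x_i(v)$ almost everywhere (well defined under the paper's smoothness assumption on $x_i$). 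Integrating from $0$ and using $p_i(0)=0$, so that $U_i(0)=0$, yields $U_i(v)=\int_0^v x_i(z)\diff z$, which rearranges to the displayed payment formula \eqref{eq:myerson}.

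For sufficiency, given a monotone $x_i(\cdot)$ paired with the payment rule \eqref{eq:myerson}, I would verify DSIC directly. For any report $b$ and true value $v$, substituting the formula into $u_i(v,v)-u_i(b,v)$ collapses the expression to $\int_b^v(x_i(z)-x_i(b))\diff z$ when $v>b$ and to $\int_v^b(x_i(b)-x_i(z))\diff z$ when $v<b$; both integrals are non-negative by monotonicity, giving DSIC. IR follows immediately since $U_i(v)=\int_0^v x_i(z)\diff z\ge 0$.

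The main obstacle is purely technical: handling non-differentiability and non-continuity of $x_i(\cdot)$ at breakpoints. The paper's assumption that $x_i$ is right-differentiable with finitely many non-differentiable points tames this, because the domain partitions into finitely many intervals on which the envelope integration and the final substitution are unambiguous. Moreover, the right-continuity convention $x_i(v)=\lim_{z\to v^+}x_i(z)$ at jumps ensures that the integral formula and the IC comparisons remain consistent at the boundary points, so no separate treatment is needed beyond invoking convexity of $U_i$ piecewise.
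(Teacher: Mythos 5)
Your proof is correct, but note that the paper does not actually prove Lemma~\ref{Myerson}: it is stated as a cited classical result, and the only proofs in the paper are for its ROI-constrained generalization (Theorem~\ref{thm:characterization}, with steps in Appendices~\ref{appen:step_1} and~\ref{appen:step_3}). Your argument is the standard one and matches the template the paper uses there --- the paired-exchange inequalities for monotonicity, a payment-pinning step (you phrase it via convexity of the indirect utility $U_i$ and the envelope theorem, the paper via the local ``payment sandwich'' $v(x(w)-x(v))\le p(w)-p(v)\le w(x(w)-x(v))$ with $w=v+\delta$, which are two presentations of the same idea), and the direct substitution $u_i(v,v)-u_i(b,v)=\int_b^v(x_i(z)-x_i(b))\,\mathrm{d}z\ge 0$ for sufficiency --- so there is nothing to correct.
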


Clearly, these results cannot be directly applied to the ROI-constrained bidders, because the payment derived from Myerson's Lemma may violate the ROI constraints. 
The main result in this section is a complete characterization of the DSIC mechanisms with ROI-constrained bidders. We will see that the monotonicity condition for the allocation remains the same, but the payment rule needs to be modified appropriately to accommodate the ROI constraints.

\begin{thm}[Characterization]
\label{thm:characterization}
For ex post ROI-constrained bidders, a single-parameter mechanism is DSIC if and only if:
\begin{itemize}
    \item {[Monotone Allocation Rule]} the allocation rule is monotonically non-decreasing, \emph{i.e.}, $x_i(v)\leq x_i(v')$ for all $v<v'$ and bidder $i$;
    \item {[Unique Payment Rule]} for each monotonically non-decreasing allocation rule $x_i(\cdot)$, and $p_i(0)=0$, the payments are given by
    \begin{equation}
        \label{eq:payment}
        p_i(v) = M_i \myerson_i(v) - \max_{0 \leq z \leq v}\{M_i \myerson_i(z) - z x_i(z)\},
    \end{equation}
    where $\myerson_i$ is the Myerson payment  given in~\eqref{eq:myerson}.
\end{itemize}
\end{thm}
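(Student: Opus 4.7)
I would follow the structure of Myerson's classical proof, first establishing that monotonicity of $x_i$ is necessary for DSIC and then pinning down the payment rule via an envelope-style argument combined with the ROI/IR constraints. The central objects are the indirect utility $U(v) = M_i v x(v) - p(v)$ and the ``rebate'' $R(v) = \max_{z \leq v} g(z)$ with $g(z) = M_i \widetilde{p}(z) - z x(z)$; both directions are organized around showing $U(v) = M_i \int_0^v x(y) dy + R(v)$.

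For monotonicity, I would run a standard two-way swap on $v_1 < v_2$. The $v_2$-bidder's downward report of $v_1$ is automatically feasible, since IR at $v_1$ gives $v_1 x(v_1) \geq p(v_1)$ and hence $v_2 x(v_1) \geq p(v_1)$; DSIC then yields $p(v_2) - p(v_1) \leq M_i v_2 (x(v_2) - x(v_1))$. When the mirror $v_1$-bidder's upward deviation is feasible, the usual sandwich closes the argument; otherwise $v_1 x(v_2) < p(v_2)$, and combining this with IR at $v_1$ and the preceding inequality yields $(v_1 - M_i v_2)(x(v_1) - x(v_2)) > 0$, which, since $M_i > 1$ and $v_1 < v_2$, forces $x(v_1) < x(v_2)$. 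For the payment formula, iterating DSIC on arbitrarily small downward steps gives $U(v) \geq U(z) + M_i \int_z^v x(y) dy$ for every $z \leq v$; substituting the IR-implied bound $U(z) \geq (M_i - 1) z x(z)$ and maximizing over $z$ produces $U(v) \geq M_i \int_0^v x(y) dy + R(v)$, equivalently $p(v) \leq M_i \widetilde{p}(v) - R(v)$. The matching upper bound on $p$ (uniqueness) follows from the observation that any strict reduction of $p$ at a point $v^\ast$ expands the feasibility set for upward deviations to $v^\ast$, and at the new boundary type one verifies that reporting $v^\ast$ becomes strictly more profitable than truth, contradicting DSIC.

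Sufficiency is where the most interesting work lies. IR follows immediately by taking $z = v$ in the definition of $R$. For DSIC, a direct calculation gives $U(v) - u(b, v) = M_i \int_b^v (x(z) - x(b)) dz + R(v) - R(b)$ when $b \leq v$ and $M_i \int_v^b (x(b) - x(z)) dz + R(v) - R(b)$ when $b > v$; both integrals are nonnegative by monotonicity of $x$. For $b \leq v$, monotonicity of $R$ gives $R(v) - R(b) \geq 0$ as well, closing the argument. The main obstacle, and the step that justifies the whole construction, is the case $b > v$ with $R(b) > R(v)$: here I let $z^\ast \in (v, b]$ be the argmax defining $R(b)$ (so that ROI binds at $z^\ast$ under the formula), expand the feasibility hypothesis $v x(b) \geq p(b) = M_i \widetilde{p}(b) - g(z^\ast)$ via Myerson's identity into $(M_i b - v) x(b) \leq (M_i - 1) z^\ast x(z^\ast) + M_i \int_{z^\ast}^b x(y) dy$, and upper-bound the right-hand side by $x(b)(M_i b - z^\ast)$ using $x(z^\ast) \leq x(b)$ and $\int_{z^\ast}^b x(y) dy \leq (b - z^\ast) x(b)$. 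This forces $v \geq z^\ast$, contradicting $z^\ast > v$; hence any feasible upward deviation must satisfy $R(b) = R(v)$, at which point DSIC reduces to $M_i \int_v^b (x(b) - x(z)) dz \geq 0$, which holds by monotonicity.
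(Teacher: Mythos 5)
Your overall architecture matches the paper's: the same two-way swap with the IR-feasibility case split for monotonicity, and for sufficiency the same key computation showing that any IR-feasible upward misreport $b>v$ must leave the running max unchanged (your chain $(M_i b - v)x(b) \le (M_i-1)z^*x(z^*) + M_i\int_{z^*}^b x(y)\diff y \le (M_i b - z^*)x(b)$ is algebraically the same estimate the paper uses to show that when the argmax lies strictly above $v$ the payment exceeds $v\, x(b)$ and the deviation is infeasible). Packaging everything through the identity $U(v) = M_i\int_0^v x(y)\diff y + R(v)$ is a clean presentational choice, and the monotonicity, IR, and sufficiency parts are correct.

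The one genuine gap is the uniqueness step (your ``matching upper bound on $p$''). The envelope argument correctly yields $p(v) \le M_i\widetilde{p}(v) - R(v)$ for every DSIC pair $(x,p)$, but the claimed converse --- that a strict reduction of $p$ at some $v^*$ makes reporting $v^*$ strictly profitable ``at the new boundary type'' --- does not close. For a candidate deviator $w < v^*$, the DSIC comparisons you have established only give a \emph{lower} bound on the truthful utility $U(w)$; if the alternative payment rule also undercharges at $w$, then $U(w)$ may well exceed the deviation utility $M_i w\, x(v^*) - p(v^*)$, and no contradiction arises at any single point. In particular, at the boundary type $w_0 = p(v^*)/x(v^*)$ the deviation utility equals $(M_i-1)w_0\, x(v^*)$, which cannot be compared against the undetermined $U(w_0)$. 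To repair this you must either propagate the deficit downward (a feasible deviation to $v^*$ forces $p(w) \le M_i\widetilde{p}(w) - R(w) - c + o(1)$ for $w$ just below $v^*$, and iterating drives the deficit to $v=0$, contradicting $p(0)=0$), or argue locally as the paper does: at every point either the two-sided Myerson sandwich applies, pinning down $\diff p/\diff v = M_i v\, \diff x/\diff v$, or upward deviations from all nearby lower types are IR-infeasible, which squeezes $v\,x(v) \le p(v) \le v\,x(v)$ and forces the first-price payment exactly; either way $p$ is locally determined and integrates to the stated formula.
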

We can interpret this characterization from two perspectives: First, from the perspective of the initial value $t_i$ of bidder $i$, it suggests that compared to the classic Myerson auction, an ROI-constrained bidder with value $v$ will need to pay the initial Myerson payment $M_i \myerson_i(v)$ (recall that $t_i=M_iv_i$), minus a ``rebate'' which equals to the largest ``violation'' of the Myerson payment to the ROI requirement when the bidder's valuation is no more than $v$.

Second, from the perspective of the value $v_i$, since the ROI-constrained bidder assigns a weight $M_i > 1$ to her obtained value from the allocation, but not to her payment, a naive application that charges the bidder $M_i$ times of the Myerson payment may violate the IR constraint. Therefore, we need to iteratively apply the Myerson payment increment (multiplied by $M_i$) in small intervals and truncate the payment at the value whenever necessary.
These two perspectives are mathematically equivalent, and we adopt the second perspective in the following for exposition convenience.

Next, before proving this theorem, some observations are immediate from this characterization. We defer the proof of these observations to Appendix~\ref{appen:prop_1}.
\begin{pro}
\label{prop:observation}
\ 
\begin{enumerate}
    \item We always have $p_i(v) \leq vx_i(v)$ and $p_i(v) \leq M_i \myerson_i(v)$ for any bidder $i$ and value $v$ in DSIC mechanisms.
    \item The payment function $p_i(\cdot)$ is monotonically non-decreasing for any bidder $i$ in DSIC mechanisms.
\end{enumerate}
\end{pro}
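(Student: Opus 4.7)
The plan is to first rewrite the payment rule from Theorem~\ref{thm:characterization} in a more convenient equivalent form. By distributing the negation through the $\max$, we obtain
\[ p_i(v) \;=\; \min_{0 \leq z \leq v}\bigl\{\, z\, x_i(z) + M_i\bigl(\myerson_i(v) - \myerson_i(z)\bigr) \bigr\}. \]
Both inequalities in part~1 then fall out immediately by evaluating the minimand at specific points: plugging in $z = 0$ yields $p_i(v) \leq M_i \myerson_i(v)$ (using $\myerson_i(0) = 0$), and plugging in $z = v$ yields $p_i(v) \leq v\, x_i(v)$.

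For part~2, I would fix $v_1 < v_2$, let $z^{*}$ attain the minimum defining $p_i(v_2)$, and split into two cases based on its location. If $z^{*} \leq v_1$, then $z^{*}$ remains a feasible choice in the minimization defining $p_i(v_1)$; since $\myerson_i$ is monotonically non-decreasing (a consequence of $x_i$ being monotone, by Lemma~\ref{Myerson}), the objective evaluated at $z^{*}$ under $v_2$ is at least its value at $z^{*}$ under $v_1$, which is in turn at least $p_i(v_1)$. If instead $v_1 < z^{*} \leq v_2$, then $\myerson_i(v_2) \geq \myerson_i(z^{*})$ implies the objective at $z^{*}$ is at least $z^{*} x_i(z^{*})$; by monotonicity of $x_i$ this is at least $v_1 x_i(v_1)$, and part~1 applied at $v_1$ bounds this further below by $p_i(v_1)$. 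Either way, $p_i(v_2) \geq p_i(v_1)$.

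I do not foresee a meaningful obstacle. The single conceptual step is recognizing that the outer $\max$ in the payment rule can be absorbed to expose a clean ``minimum over $z$ of two natural upper bounds, shifted by a Myerson increment'' representation. After that, both claims reduce to short case analyses that leverage only the monotonicity of $x_i(\cdot)$ and of the Myerson payment $\myerson_i(\cdot)$, together with the two upper bounds already established in part~1. The right-differentiability assumption on $x_i$ from the preliminaries plays no role here; the finitely many discontinuities of $x_i$ are enough to ensure the minimum in the rewritten form is attained.
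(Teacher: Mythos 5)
Your proof is correct and follows essentially the same route as the paper: rewriting the payment as a minimum and evaluating at $z=0$ and $z=v$ is just a repackaging of the paper's lower bounds on the $\max$ term, and your case split on the location of the minimizer $z^*$ mirrors the paper's case split on whether the $\max$ term strictly increases from $v_1$ to $v_2$, using the same ingredients (monotonicity of $\myerson_i$ and of $v x_i(v)$, plus part~1). No gaps.
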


Now we proceed to prove Theorem~\ref{thm:characterization}. The proof consists of showing the following claims in sequence. It is not difficult to see that these three claims together imply Theorem~\ref{thm:characterization}.

\begin{enumerate}
    \item For ex post ROI-constrainted bidders, if a mechanism is DSIC, then the allocation rule must be monotonically non-decreasing.
    \item Any monotonically non-decreasing allocation rule $x_i(\cdot)$ with the payment rule given in~\eqref{eq:payment} produces a DSIC mechanism. 
    \item Given any monotone allocation rule $x(\cdot)$, the payment rule $p(\cdot)$ such that $(x, p)$ is DSIC, if exists, must be unique.
\end{enumerate}

The analyses of steps (1) and (3) are very similar to the proof of the original Myerson's Lemma, and we defer the details to Appendix~\ref{appen:step_1} and~\ref{appen:step_3}. 
Next, we prove step (2). 
When clear from context, we will drop the subscript $i$ in $x_i(\cdot), p_i(\cdot), u_i(\cdot)$ and $M_i$ as shorthand in the following proofs.

\begin{proof}[Proof of Step (2)]
    Consider a bidder $i$ with private valuation $v$ and fix the other bids \textbf{$b_{-i}$}. 
    We examine the utilities of bidder $i$ when she bids her true valuation and when she bids some different value $v' \neq v$. Consider two cases.
    \begin{itemize}
        \item When $v' < v$, we have $\max_{0 \leq z \leq v}\{M \myerson(z) - z x(z)\} \geq \max_{0 \leq z \leq v'}\{M \myerson(z) - z x(z)\}$, which implies
        $$p(v) - p(v') \leq M(\myerson(v) - \myerson(v')).$$
        This inequality effectively removes the $\max$ term in the payment formula~\eqref{eq:payment} and reduces the problem to that with the Myerson payment. This allows us to apply the standard argument for the Myerson auction to show the DSIC property of our mechanism. We show the analysis below for completeness.
        
        We can compute the utility difference of bidder $i$ when she bids $v$ and $v'$, and get
        \[
        \begin{aligned}
          u(v, v) - u(v', v) &= \left(M v x(v) - p(v)\right) - \left(M v x(v') - p(v')\right) \\
          &\geq M(vx(v) - \myerson(v)) - M(vx(v') - \myerson(v')) \\
          &= M \int_0^v{x(z)}\diff z - M\left(vx(v') - v'x(v') + \int_0^{v'}{x(z)}\diff z\right)\\
          &=M\left(\int_{v'}^v{x(z)}\diff z - (v-v')x(v')\right) \\
          &\geq M\left(\int_{v'}^v{x(v')}\diff z - (v-v')x(v')\right) = 0,
        \end{aligned}
        \]
        where the second equality is by plugging in the Myerson payment formula~\eqref{eq:myerson}. This means bidder $i$ has no incentive to misreport her valuation $v$ as $v'$ in this case.
        \item When $v' > v$, we examine $\max_{0 \leq z \leq v}\{M \myerson(z) - z x(z)\}$ and $\max_{0 \leq z \leq v'}\{M \myerson(z) - z x(z)\}$. There are two possibilities:
        \begin{itemize}
            \item If these two terms are equal, then we can apply the same argument as in the previous case (and also as in the Myerson auction analysis) to prove the DSIC property. We omit the details here.
            \item If $\max_{0 \leq z \leq v}\{M \myerson(z) - z x(z)\} < \max_{0 \leq z \leq v'}\{M \myerson(z) - z x(z)\}$, this means $\argmax_{0 \leq z \leq v'}\{M \myerson(z) - z x(z)\} = v^* > v$. Then at valuation $v'$, we should have 
            \[
            \begin{aligned}
              p(v') &= M\myerson(v') - (M \myerson(v^*) - v^*x(v^*))\\
              &= M\left(v'x(v') - v^*x(v^*) - \int_{v^*}^{v'}{x(z)}\diff z\right) + v^* x(v^*) \\
              (\textrm{replace $M$ by 1}) &\geq v'x(v') - \int_{v^*}^{v'}{x(z)}\diff z \\
              &\geq v'x(v') - \int_{v^*}^{v'}{x(v')}\diff z \\
              &= v'x(v') - (v'-v^*)x(v') \\
              &= v^*x(v') > vx(v').
            \end{aligned}
            \]
            That is to say, when reporting $v'$, the payment of bidder $i$ will be greater than the value she obtains (which is $vx(v')$), therefore violating the IR condition. So bidder $i$ has no incentive to misreport as $v'$ in this case.
        \end{itemize}
    \end{itemize}
    \end{proof}

\section{Optimal Auction Design for a Single Bidder} 
\label{sec:opt}
Having obtained the precise characterization of the allocation rule and payment function in the setting with ROI constraints, we now turn to the revenue maximization auction design.
Recall that in the Myerson auction~\cite{myerson1981optimal} and previous works in the ex ante ROI constraints setting~\cite{balseiro2021landscape, golrezaei2021auction}, the revenue maximization problem is reduced to the problem of maximizing (modified) virtual welfare. Unfortunately, with ex post ROI constraints, the payment function characterization~\eqref{eq:payment} involves an additional $\max$ term compared to the Myerson payment, and it is unclear how to incorporate this term into a modified virtual valuation formulation. We present the following simple example with a single bidder to demonstrate that, unlike the Myerson auction, the allocation that maximizes the virtual welfare may no longer be optimal with ROI-constrained bidders.
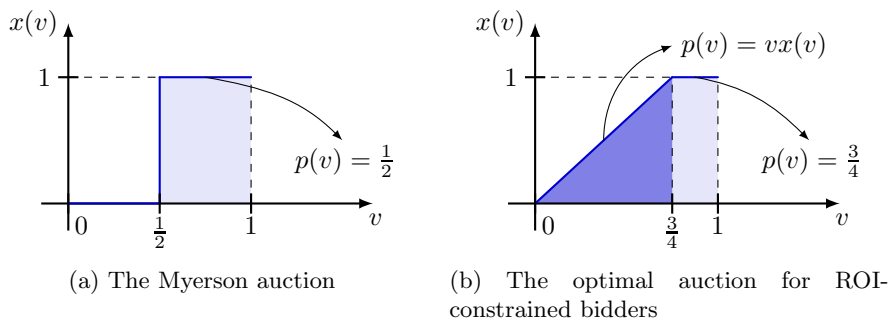
\begin{figure}
\centering
    \subfloat[The Myerson auction]{
    \label{fig:myerson}

\begin{tikzpicture}[scale=1.5]
\def\textscale{1}
\def\xmax{2.7} 
\def\ymax{1.6} 
  \message{^^JConstant}
  \def\kx{0.6*\xmax}
  \def\ky{0.7*\ymax} 
  \def\a{0.5*\kx} 
  \fill[myblue!10] (\a,0) rectangle (\kx,\ky);
  \draw[->,thick] (0,-0.15*\ymax) -- (0,\ymax) node[left,scale=\textscale] {$x(v)$};
  \draw[->,thick] (-0.15*\ymax,0) -- (\xmax,0) node[right=1,below,scale=\textscale] {$v$};
  \draw[xline,line cap=round] (0,0) -- (\a,0);
  \draw[xline,line cap=round] (\a,0) -- (\a,\ky);
  \draw[xline,line cap=round] (\a,\ky) -- (\kx,\ky);
  \draw[dashed] (0,\ky) -- (\a,\ky);
  \draw[dashed] (\kx,0) -- (\kx,\ky);
  \draw[anotline,line cap=round] (0.75*\kx,\ky) to [out=-10,in=135] (1.5*\kx,0.5*\ky) node[left=-1,below=-1.7,scale=\textscale] {$p(v)=\frac12$};
  \tick{\a,0}{90} node[below=-2.5,scale=\textscale] {$\frac{1}{2}$};
  \tick{\kx,0}{90} node[below=-2.5,scale=\textscale] {$1$};
  \tick{0,\ky}{0} node[left=-1,scale=\textscale] {$1$};
  \tick{0,0}{90} node[left=-5,below=-2.5,scale=\textscale] {$0$};  
\end{tikzpicture}
    }
    \hspace{.1in}
\subfloat[The optimal  auction for ROI-constrained bidders]{
\label{fig:example_rand}

\begin{tikzpicture}[scale=1.5]
\def\textscale{1}
  \def\xmax{2.7} 
  \def\ymax{1.6} 
  \def\kx{0.6*\xmax}
  \def\ky{0.7*\ymax} 
  \def\a{0.75*\kx} 
  \fill[myblue!10] (\a,0) rectangle (\kx,\ky);
  \fill[myblue!50] (0,0) -- (\a,\ky) -- (\a,0) -- cycle;
  \draw[->,thick] (0,-0.15*\ymax) -- (0,\ymax) node[left,scale=\textscale] {$x(v)$};
  \draw[->,thick] (-0.15*\ymax,0) -- (\xmax,0) node[right=1,below,scale=\textscale] {$v$};
  \draw[xline,line cap=round] (0,0) -- (\a,\ky);
  \draw[xline,line cap=round] (\a,\ky) -- (\kx,\ky);
  \draw[dashed] (0,\ky) -- (\a,\ky);
  \draw[dashed] (\kx,0) -- (\kx,\ky);
  \draw[dashed] (\a,0) -- (\a,\ky);
  \draw[anotline,line cap=round] (0.875*\kx,\ky) to [out=-10,in=135] (1.5*\kx,0.5*\ky) node[left=-1,below=-1.7,scale=\textscale] {$p(v)=\frac34$};
  \draw[anotline,line cap=round] (0.375*\kx,0.5*\ky) to [out=90,in=200] (0.75*\kx,1.25*\ky) node[right=0,scale=\textscale] {$p(v)=vx(v)$};
  \tick{\a,0}{90} node[below=-2.5,scale=\textscale] {$\frac34$};
  \tick{\kx,0}{90} node[below=-2.5,scale=\textscale] {$1$};
  \tick{0,\ky}{0} node[left=-1,scale=\textscale] {$1$}; 
  \tick{0,0}{90} node[left=-5,below=-2.5,scale=\textscale] {$0$};  
\end{tikzpicture}
}
\caption{The Myerson auction and the optimal auction for one bidder with uniform value distribution over $[0,1]$ and $M=2$.}
\label{fig:myerson_and_rand}
\end{figure}

\begin{exmp}
Consider selling a single item to a single bidder with ROI constraint $M=2$ and valuation for the item $v$ following a uniform distribution $U[0, 1]$. If we disregard the ROI constraint (\emph{i.e.}, let $M=1$), the virtual valuation of this bidder is $\phi(v) = 2v - 1$, and the optimal Myerson auction, as shown in Fig.~\ref{fig:myerson}, sells the item at price $p=\frac12$ with the expected revenue of $\frac12\cdot \frac12=\frac14$.

However, with the ROI constraint $M=2$ in presence, this allocation rule ($x(v)=0$ when $v < 1/2$ and $x(v)=1$ otherwise) is no longer optimal.
As shown in Fig.~\ref{fig:example_rand}, the optimal auction, which will be proved in Theorem~\ref{thm:opt} later in this section, is a randomized auction with the allocation rule given by
\begin{equation*}
x(v) = \left\{
\begin{array}{lr}
   \frac43 v & \quad \text{if } v \leq \frac34 
   \\
   \displaystyle
   1 & \quad \text{if } v > \frac34.
\end{array}
\right.
\end{equation*}
This allocation rule would generate an expected revenue of $\frac38$, which is higher than $\frac14$.
\end{exmp}

This example already highlights an important feature of the optimal auction with ROI constraints: the allocation and payment may be randomized, even in the simple setting with a single bidder and uniform value distribution. It also suggests that it is difficult to follow the Myerson auction regime and reduce the revenue maximization problem to a welfare maximization problem with some modified virtual valuation. It seems a very challenging problem to obtain a characterization for the optimal auction in this setting.
Instead, in this section we focus on the special case when the item is sold to a single bidder. As we will show in the following analysis, this is already a nontrivial and interesting problem to design an optimal auction for a single bidder.

First, we show that with a single ROI-constrained bidder, the $\max$ term in the payment formula~\eqref{eq:payment} would be always $0$, reducing the payment rule~\eqref{eq:payment} to the standard Myerson payment (multiplied by a factor of $M$). We defer the proof to Appendix~\ref{appen:lem_pricing_opt}.

\begin{lem}
\label{lem:pricing_opt}
In the optimal auction with a single ROI-constrained bidder, let $(x, p)$ be the revenue-maximizing auction, then we have $M \myerson(v) \leq v\cdot x(v)$ for any value $v \in [0,v_{\mmax}]$.
In other words, $\max_{0 \leq z \leq v} \{M \myerson(z) - z x(z)\} = 0$ (which is achieved at $z=0$) for all $v \in [0,v_{\mmax}]$, and the payment rule reduces to $p(v) = M \myerson(v)$.
\end{lem}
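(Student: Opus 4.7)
The plan is to proceed by contradiction. Suppose the optimal allocation $x^*$ satisfies $g^*(v^\circ) > 0$ at some $v^\circ$, where $g(v) := M\myerson(v) - vx(v) = (M-1)vx(v) - M\int_0^v x(z)\diff z$. Let $v^*$ be a maximizer of $g^*$ with $\delta := g^*(v^*) > 0$; by taking the largest such point, I may assume $g^*(v) \leq \delta$ with strict inequality for $v > v^*$. The goal is to construct a monotone non-decreasing perturbation $\hat{x}$ of $x^*$ that strictly increases revenue, contradicting the optimality of $x^*$. The first step is to rewrite the revenue using Myerson's integration-by-parts identity as
\[\rev(x) = M\int_0^{v_{\max}} \phi(v) x(v) f(v) \diff v - \int_0^{v_{\max}} h(v) f(v) \diff v,\]
where $\phi(v) = v - (1-F(v))/f(v)$ is the virtual value and $h(v) = \max_{z \leq v} g(z) \geq 0$ is the rebate function from \eqref{eq:payment}.

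The key construction is: for $\epsilon > 0$ small and any $a \in (0, v^*/M)$, define $\hat{x}(v) = \min\{x^*(v) + \epsilon \mathbb{1}_{v \geq a},\, 1\}$. As the minimum of two non-decreasing functions bounded by $1$, $\hat{x}$ is feasible. In the generic case when $x^*(v_{\max}) + \epsilon \leq 1$, the truncation is inactive and a direct computation yields $\hat{g}(v) = g^*(v)$ for $v < a$ and $\hat{g}(v) = g^*(v) + \epsilon(Ma - v)$ for $v \geq a$. In particular $\hat{g}(v^*) = \delta - \epsilon(v^* - Ma) < \delta$ (since $a < v^*/M$), and for small $\epsilon$ the global maximum of $\hat{g}$ is reduced by $\Theta(\epsilon)$.

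To compute the revenue change, I use the elementary identity $\int_a^{v_{\max}} \phi(v) f(v) \diff v = a(1-F(a))$, so the virtual-welfare term increases by $M\epsilon a(1-F(a))$. For the rebate, bounding $\hat{g}(z) \leq g^*(z) + \epsilon(M-1)a$ uniformly on $[a, v^*)$ and $\hat{g}(z) \leq \delta - \epsilon(v^* - Ma) + O(\epsilon^2)$ uniformly on $[v^*, v_{\max}]$ gives
\[\Delta \int h f \diff v \leq \epsilon\bigl[(M-1)a(F(v^*)-F(a)) - (v^* - Ma)(1-F(v^*))\bigr] + O(\epsilon^2).\]
Combining and simplifying algebraically (the $Ma$ cross-terms cancel and the $aF(a)$, $aF(v^*)$ terms collect cleanly), I obtain the striking lower bound
\[\Delta \rev \geq \epsilon\bigl[\,a(F(v^*) - F(a)) + v^*(1 - F(v^*))\,\bigr] + O(\epsilon^2) > 0\]
for sufficiently small $\epsilon > 0$, contradicting optimality of $x^*$.

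The main obstacle lies in the non-generic configurations. When $x^*(v_{\max}) = 1$, the truncation in $\hat{x}$ activates on a sub-interval of length $O(\epsilon)$ near $v_{\max}$, contributing an additional $O(\epsilon^2)$ error to the virtual-welfare computation that must be shown not to destroy the leading-order positive improvement. When $x^*$ has a jump at $v^*$ (for instance the Myerson-style allocation $x^* = \mathbb{1}_{v \geq v_0}$), it is cleaner to restrict the perturbation to $[a, v^*)$: the analogous computation with $\int_a^{v^*} \phi f \diff v = a(1-F(a)) - v^*(1-F(v^*))$ and the sharper rebate reduction $\hat{g}(v) - g^*(v) = -M\epsilon(v^* - a)$ for $v \geq v^*$ yields the slightly weaker but still strictly positive lower bound $\Delta \rev \geq \epsilon \cdot a(F(v^*) - F(a))$. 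In every case a positive improvement is obtained, completing the contradiction.
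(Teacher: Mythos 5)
Your argument reaches the right conclusion but by a genuinely different route from the paper's. The paper also argues by contradiction via a local improvement, but its perturbation is anchored at the violating point: it takes $v^*$ to be the \emph{smallest} maximizer of $g(z)=M\myerson(z)-zx(z)$, replaces $x$ on a short interval $(v^*-\delta,v^*)$ by the first-price curve $(v/v^*)^{1/(M-1)}x(v^*)$ (which makes $g$ constant there and strictly raises the allocation), and then compares the two mechanisms' \emph{payments pointwise}: they coincide outside the interval, and inside it the new payment is first-price and hence strictly larger. That comparison never needs the virtual-welfare decomposition. Your perturbation ($+\epsilon$ above a low threshold $a<v^*/M$) is more global and relies on $\rev=M\int\phi x f-\int h f$ with $h(v)=\max_{z\le v}g(z)$, plus bookkeeping of how the running maximum moves; it buys an explicit quantitative $\Theta(\epsilon)$ improvement in terms of $F$, at the cost of exactly the step you gloss over.

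That step is the bound $\hat{h}(v)\le\delta-\epsilon(v^*-Ma)+O(\epsilon^2)$ on $[v^*,v_{\mmax}]$. Since $\hat{h}(v)=\max_{z\le v}\hat{g}(z)$ and $\hat{g}(z)=g^*(z)+\epsilon(Ma-z)$ for $z\ge a$, the maximum can be attained at some $z_0\in(Ma,v^*)$ where $g^*(z_0)$ equals or approaches $\delta$ (a plateau of maximizers, or any near-maximizer below $v^*$); then $\hat{g}(z_0)=\delta-\epsilon(z_0-Ma)>\delta-\epsilon(v^*-Ma)$ and the claimed reduction fails. Your choice of the \emph{largest} maximizer controls the wrong side: it rules out competitors above $v^*$, while the danger comes from below. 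The repair is easy and should be stated: since $g^*(z)\le(M-1)z$, choosing $a$ small enough forces $\sup_{z\le Ma}g^*(z)<\delta/2$, and $\hat{g}(z)\le g^*(z)\le\delta$ for $z\ge Ma$, so for small $\epsilon$ the rebate on $[v^*,v_{\mmax}]$ at least does not increase; this weaker bound already gives $\Delta\rev\ge\epsilon a\left(M-F(a)-(M-1)F(v^*)\right)>0$ whenever $F(a)<1$, which suffices for the contradiction. With that fix, and a slightly more careful treatment of the truncation when $x^*$ reaches $1$ before $v^*$ (the region where the truncation is active need not have length $O(\epsilon)$, though where $x^*=1$ the perturbation simply vanishes), your proof goes through.
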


With Lemma~\ref{lem:pricing_opt} at hand, it seems with a single bidder, we are back to the classic Myerson regime, where the revenue maximization problem can be converted to a welfare maximization problem with respect to the virtual valuation. 
That is, recall from the Myerson's theorem~\cite{myerson1981optimal}, we have 
\[
    \rev = \int_0^{v_{\mmax}} \phi(v) x(v) f(v) \diff v,
\]
where $\phi(v) = \left(v-\frac{1-F(v)}{f(v)} \right)$ is the \emph{virtual valuation}.
However, we still have the additional constraint that $M \myerson(v) \leq v\cdot x(v)$ for every $v$. This restricts our allocation space and turns the problem into a constrained welfare maximization problem.

In the following, we provide some further characterizations on the structure of the optimal auction with a single ROI-constrained bidder. We defer the proof to Appendix~\ref{appen:lem_TwoPossibility}.

\begin{lem}
\label{lem:TwoPossibility}
With a single ROI-constrained bidder, there always exists a revenue-maximizing auction such that for any valuation $v$, at least one of the following statements holds:
\begin{itemize}
    \item the derivative of allocation rule at the valuation $v$ exists, and $x'(v)=0$;
    \item the payment follows the first-price rule, \emph{i.e.}, $p(v) = v x(v)$.
\end{itemize}
\end{lem}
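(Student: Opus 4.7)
The plan is a local perturbation argument built on Lemma~\ref{lem:pricing_opt} and the DMR assumption. By Lemma~\ref{lem:pricing_opt}, any revenue-maximizing $(x,p)$ satisfies $p = M\myerson$, so the revenue equals $M\int_0^{v_{\mmax}}\psi(v) x(v)\diff v$ where $\psi(v) := vf(v) - (1-F(v))$; under DMR, $\psi$ is non-decreasing. The feasibility conditions on $x$ are monotonicity, $x(v)\in[0,1]$, and the ROI-equivalent constraint $g(v) := M\int_0^v x(z)\diff z - (M-1)v x(v) \ge 0$ for every $v$.

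Among all revenue-maximizing allocations, I would pick one minimizing the Lebesgue measure of the \emph{bad set} $B(x) := \{v : x'(v) \text{ exists},\, x'(v) > 0,\, g(v) > 0\}$ (existence via weak-$*$ compactness of the feasible set of monotone $[0,1]$-valued functions). Suppose for contradiction that $|B(x)| > 0$; pick $v_0 \in B(x)$ and, by continuity of $x'$ and $g$, a small open interval $I = (v_0-\delta, v_0+\delta)$ on which $x$ is strictly increasing and $g > 0$.

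I would then replace $x$ on $I$ by $\tilde x$ consisting of a constant piece $\tilde x \equiv x(v_0-\delta)$ on $(v_0-\delta, v^*)$ followed by a ``first-price'' profile $\tilde x(v) = Cv^{1/(M-1)}$ on $(v^*, v_0+\delta)$ --- the latter being the unique solution of the ODE $(M-1)v\tilde x'(v) = \tilde x(v)$ obtained by differentiating the binding equality $(M-1)v\tilde x(v) = M\int_0^v\tilde x$. The parameters $(v^*, C)$ should be tuned so that $\tilde x(v_0+\delta) = x(v_0+\delta)$ and $\int_I \tilde x = \int_I x$; a two-parameter intermediate-value argument, exploiting the strict slack $g(v_0) > 0$ and strict increase of $x$ across $I$, should deliver the calibration. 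Feasibility is then verified by direct computation: on the constant piece $g$ is strictly increasing (its derivative equals $x(v_0-\delta)$), so $g\ge 0$ is inherited from $g(v_0-\delta)\ge 0$; on the first-price piece $g = 0$ by construction; and $\int_0^v\tilde x = \int_0^v x$ for $v\ge v_0+\delta$ preserves the constraint outside $I$. Since $\tilde x$ moves allocation mass from lower to higher $v$ within $I$ while preserving total mass, DMR gives $\int_I \psi \tilde x \ge \int_I \psi x$, so $\tilde x$ is also revenue-maximizing.

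But on $I$, either $\tilde x'(v) = 0$ (constant piece) or $(M-1)v\tilde x(v) = M\int_0^v\tilde x$ (first-price piece), so $I \cap B(\tilde x)$ has measure zero and $|B(\tilde x)| \le |B(x)| - |I| < |B(x)|$, contradicting the minimality of $|B(x)|$. The hardest step I anticipate is the feasibility verification of the perturbation --- in particular, calibrating $(v^*, C)$ so that the monotonicity requirement $\tilde x(v^{*-}) \le \tilde x(v^{*+})$ is compatible with both matching conditions and the ROI constraint across the jump at $v^*$. This reduces to a coupled boundary/integral condition for a first-order ODE with an unknown switching time, and should be tractable by an intermediate-value argument exploiting the strict-slack and strict-monotonicity hypotheses on $I$.
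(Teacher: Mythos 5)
Your proposal has genuine gaps, the most basic of which is that it invokes the DMR assumption, while Lemma~\ref{lem:TwoPossibility} is stated (and proved in the paper) for arbitrary value distributions; DMR is only introduced afterwards, for Lemma~\ref{lem:OneInterval} and Theorem~\ref{thm:opt}. So at best you would be proving a strictly weaker statement. The paper's argument avoids DMR entirely by exploiting that in a region where the constraint is slack ($p(v) < v x(v)$) and $x'(v) > 0$, small perturbations of $x$ in \emph{either direction} remain feasible; optimality then forces the virtual value $\phi$ to vanish identically on any such interval (otherwise a one-sided bump of size $\delta$ on a subinterval where $\phi$ has constant sign strictly improves revenue), and once $\phi \equiv 0$ the allocation can be rearranged at no cost in revenue to satisfy one of the two conditions. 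No mass-preserving rearrangement, and hence no monotonicity of $\psi$, is needed.

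Beyond that, your perturbation itself does not close. You have two free parameters $(v^*, C)$ but effectively three constraints: the endpoint match $\tilde x(v_0+\delta) = x(v_0+\delta)$ already pins down $C$; the claim that $g \equiv 0$ on the first-price piece requires not just the ODE $g'=0$ but the initial condition $g(v^{*+})=0$ (note $g$ jumps \emph{down} at $v^*$ because $\tilde x$ jumps up), which pins down $v^*$; and then the integral-matching condition $\int_I \tilde x = \int_I x$ is an additional equation with no parameter left to satisfy it. If you drop the integral matching, the feasibility of $g \ge 0$ for $v > v_0+\delta$ and the DMR revenue comparison both break. Even granting the calibration, the inequality $\int_I \psi\,\tilde x \ge \int_I \psi\, x$ requires $\tilde x - x$ to cross zero once from below on $I$, which you assert but do not establish ($x$ is an arbitrary strictly increasing function on $I$, so $\tilde x - x$ may oscillate on the power-function piece). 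Finally, the selection of a maximizer minimizing $\lvert B(x)\rvert$ needs lower semicontinuity of $x \mapsto \lvert B(x)\rvert$ under weak-$*$ convergence, which is doubtful since $B(x)$ is defined through pointwise derivatives. I would recommend abandoning the calibrated rearrangement and adopting the bidirectional first-order perturbation argument.
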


Lemma~\ref{lem:TwoPossibility} allows us to focus on auctions with a very specific structure: as long as the allocation is not constant, it always follows the first-price payment rule. In particular, combining with Lemma~\ref{lem:pricing_opt}, it implies whenever $x'(v)$ exists and $x'(v)>0$, we always have $p = v x(v) = M \myerson(v)$.

Next, we want to obtain a further characterization of the optimal auction with a single bidder. However, to do this would require us to make a mild assumption on the value distribution of the bidder, which is known as the \emph{Decreasing Marginal Revenue (DMR)} condition by~\cite{devanur2017optimal2}.

\begin{defi}[Decreasing Marginal Revenue, DMR]
\label{def:DMR}
    The value distribution of a bidder satisfies the condition of decreasing marginal revenue  if and only if the function
    \begin{equation*}
    \psi(v) \triangleq \phi(v) f(v) = v f(v)+ F(v) - 1 
    \end{equation*}
    is monotonically non-decreasing.
\end{defi}

Note that $\psi(v)$ being non-decreasing is equivalent to the fact that $v\cdot (1 - F(v))$, which is the expected revenue of selling the item at price $p$, being concave, and this is where the name of this condition comes from. Intuitively, many commonly used distribution functions 
satisfy this assumption, \emph{e.g.}, uniform distributions, and any
distribution of finite support and monotone non-decreasing density. 
The DMR condition is closely related to the regularity condition but they are incompatible\footnote{The regularity condition is equivalent to the expected revenue being concave in the \emph{quantile} space, while the DMR condition means the expected revenue is concave in the \emph{value} space.}. We refer the reader to~\cite{devanur2017optimal2} for concrete examples and more discussion.

With the assumption of DMR, the optimal auction exhibits an even simpler structure than what is described in Lemma~\ref{lem:TwoPossibility}, namely that there exist only two intervals in the optimal auction: interval of $(0, D)$ with $x'(v) > 0$ and interval of $(D, v_{\mmax})$ with  $x'(v) = 0$, where $D$ is a threshold valuation between them. This leads to our main theorem in this section, which characterizes the optimal allocation rule and payment rule for a single ROI-constrained bidder.

\begin{thm}
\label{thm:opt}
The optimal auction for a single ex post ROI-constrained bidder with a DMR value distribution over $[0,v_{\mmax}]$ is as follows:  
\begin{itemize}
    \item when $v<D$, the allocation is given by $$x(v) = \left (\frac{v}{D} \right)^{\frac{1}{M-1}},$$ and the payment follows the first-price rule, \emph{i.e.}, $p(v) =v x(v)$; 
    \item when $v\geq D$,  the allocation rule is $x(v) = 1$, and the payment is given by $p(v) =D$.
\end{itemize}
Here $D$ is a threshold valuation given as follows:
\begin{itemize}
    \item if $\int_0^{v_{\mmax}} \psi(v) v^{\frac{1}{M-1}} \diff v>0$, then $D=D^{*}$ such that $\int_0^{D^{*}} \psi(v) v^{\frac{1}{M-1}} \diff v = 0$;
    \item if $\int_0^{v_{\mmax}} \psi(v) v^{\frac{1}{M-1}} \diff v\leq 0$, then $D=v_{\mmax}$.
\end{itemize}
\end{thm}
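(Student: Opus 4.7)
The plan is to combine Lemma~\ref{lem:pricing_opt} with Lemma~\ref{lem:TwoPossibility} to pin down a rigid structural template for the optimal allocation, then to invoke DMR to reduce the template to a one-parameter family, and finally to optimise over that single parameter $D$. First, Lemma~\ref{lem:pricing_opt} collapses the payment to $p(v)=M\myerson(v)$, so the standard Myerson manipulation rewrites the expected revenue as $\rev=M\int_0^{v_{\mmax}}\psi(v)\,x(v)\,\diff v$ with $\psi(v)=vf(v)+F(v)-1$. Lemma~\ref{lem:TwoPossibility} then guarantees a revenue-maximising allocation consisting of \emph{constant} pieces (on which $x'=0$) and \emph{first-price} pieces (on which $p(v)=vx(v)$). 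On a first-price piece, combining $vx(v)=M\myerson(v)$ with $\myerson(v)=vx(v)-\int_0^v x$ and differentiating gives the ODE $x(v)=(M-1)v\,x'(v)$, whose general solutions are $x(v)=c\,v^{1/(M-1)}$. A short ROI-at-a-jump computation then shows that any non-trivial optimum is continuous and must begin with a first-price piece anchored at the origin, since an initial $x\equiv 0$ segment of positive length either collapses to $x\equiv 0$ everywhere or violates $M\myerson(v)\leq v x(v)$ at the endpoint of that segment.

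The main obstacle will be to prove, under DMR, that the template has \emph{exactly one} first-price piece followed by a \emph{single} constant piece at $x=1$. I plan to attack this by writing the KKT system for
\[
\max \int_0^{v_{\mmax}}\psi(v)\,x(v)\,\diff v \quad\text{s.t.}\quad (M-1)v\,x(v)\leq M\!\int_0^v x(z)\,\diff z,\ 0\leq x\leq 1,
\]
with multipliers $\lambda(v)\ge 0$ (ROI), $\mu(v)\ge 0$ (upper bound) and $\nu(v)\ge 0$ (lower bound). Functional stationarity gives
\[
\psi(u)+\nu(u)-\mu(u)+M\!\int_u^{v_{\mmax}}\lambda(v)\,\diff v-(M-1)u\,\lambda(u)=0.
\]
On a hypothetical interior constant piece with $x\in(0,1)$, complementary slackness forces $\mu=\nu=\lambda=0$, and the stationarity equation reduces to $\psi(u)=-M\int_u^{v_{\mmax}}\lambda$, whose right-hand side is locally constant in $u$ on that piece; differentiating yields $\psi'(u)=0$, which under DMR (strict wherever $\psi$ is not flat, and resolvable by a local perturbation on any flat portion of $\psi$) rules out such a piece. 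Hence the only admissible constant piece is the terminal one at $x=1$, and anchoring the first-price ODE to the boundary condition $x(D)=1$ forces $x(v)=(v/D)^{1/(M-1)}$ on $(0,D)$; computing $p(v)=M\myerson(v)$ on the constant part then yields the flat payment $p(v)=D$ claimed in the theorem.

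It remains to optimise over $D\in(0,v_{\mmax}]$. Substituting the template into the revenue formula yields $R(D)=\frac{M}{D^{1/(M-1)}}\,g(D)-M\,\Psi(D)$, where $g(D)=\int_0^D\psi(v)v^{1/(M-1)}\,\diff v$ and $\Psi(v)=v(F(v)-1)$ is an antiderivative of $\psi$. Direct differentiation collapses the formula to
\[
R'(D)=-\frac{M}{(M-1)\,D^{M/(M-1)}}\,g(D),
\]
so the interior critical points of $R$ coincide with the zeros of $g$. Since $\psi(0)=-1<0$ and $\psi$ is non-decreasing under DMR, $g$ starts at $g(0)=0$, first decreases, and then increases; integrating by parts and using $\Psi\leq 0$ yields $g(v_{\mmax})=-\frac{1}{M-1}\int_0^{v_{\mmax}}\Psi(v)v^{(2-M)/(M-1)}\,\diff v\ge 0$. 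Hence either $g(v_{\mmax})>0$, in which case $g$ has a unique interior root $D^{*}\in(0,v_{\mmax})$ that is the optimal threshold, or $g(v_{\mmax})=0$ and $R$ is monotone on $(0,v_{\mmax}]$, forcing the boundary optimum $D=v_{\mmax}$. These two cases are precisely the closed-form expressions stated in the theorem.
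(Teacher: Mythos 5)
Most of your skeleton coincides with the paper's proof: the reduction of the payment to $p(v)=M\myerson(v)$ via Lemma~\ref{lem:pricing_opt}, the ODE $x(v)=(M-1)v\,x'(v)$ on first-price pieces yielding $x(v)=c\,v^{1/(M-1)}$ (this is exactly Proposition~\ref{prop:ExactAllocation}), the argument that the allocation cannot begin with a flat segment, and the final optimization over $D$ via $R'(D)\propto -g(D)$ are all essentially the paper's steps, and your computations there are correct. The divergence is in the key structural step --- showing there is exactly one first-price piece followed by a single flat piece at $x=1$ --- where you replace the paper's explicit perturbation argument (Lemma~\ref{lem:OneInterval}) with a KKT stationarity argument, and this is where there is a genuine gap. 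Your Lagrangian omits the monotonicity constraint $x'\geq 0$, which is \emph{active} precisely on the constant pieces you are trying to rule out. Adding it contributes a term $-\eta'(u)$ to the stationarity equation, with a multiplier $\eta\geq 0$ that is free to be nonzero exactly where $x'=0$. On an interior flat piece the stationarity condition therefore reads $\psi(u)=\eta'(u)+\mathrm{const}$, not $\psi(u)=\mathrm{const}$, so DMR gives no immediate contradiction: $\eta'$ can absorb the variation of $\psi$. A contradiction is still extractable from the full system --- $\eta$ must vanish at both endpoints of the flat piece and stay nonnegative, which forces $\int_{\underline{v}}^{v}\bigl(\psi(z)-\bar\psi\bigr)\diff z\geq 0$ throughout the piece, impossible for a non-decreasing, non-constant $\psi$ whose running integral against its own mean dips negative --- but that is a different argument from the one you wrote, and on top of it you would need to justify that the optimum of this infinite-dimensional problem with a state constraint satisfies first-order conditions at all. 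The paper's Lemma~\ref{lem:OneInterval} sidesteps all of this: it replaces the first-price-then-flat prefix $[0,\bar v]$ with a single first-price curve anchored at $\bar v$, shows the two allocations have equal area on $[0,\bar v]$ and cross exactly once, and concludes from the monotonicity of $\psi$ that the modification strictly increases revenue.

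Two smaller points. First, your parenthetical ``resolvable by a local perturbation on any flat portion of $\psi$'' is doing real work and is not substantiated; the paper's equal-area comparison handles weakly non-decreasing $\psi$ directly (modulo the degenerate $\psi\equiv 0$ case it footnotes). Second, your integration by parts actually shows $g(v_{\mmax})=-\frac{1}{M-1}\int_0^{v_{\mmax}}\Psi(v)v^{(2-M)/(M-1)}\diff v>0$ strictly whenever $F$ has full support, so your final case split ($g(v_{\mmax})>0$ versus $g(v_{\mmax})=0$) does not match the theorem's stated dichotomy ($>0$ versus $\leq 0$); this is harmless but worth stating precisely, since it in fact implies the interior threshold $D^*$ is the generic outcome.
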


This theorem provides an important insight that the optimal auction in the ROI-constrained setting is a randomized mechanism. Note that Myerson's optimal auction in the single-parameter setting is deterministic, but a decent body of works has shown that many generalizations to multi-parameter settings will lead to randomized optimal auctions~\cite{pavlov2011optimal, hart2015maximal}. Theorem~\ref{thm:opt} indicates that, even in the single-parameter environment, a slight generalization with an ROI constraint to the bidder will also lead to a randomized optimal auction.

We prove the theorem via the following steps. First, we derive the allocation of an optimal auction in an interval $(0, v^*)$ when $x'(v)$ is always positive in that interval.
Then, we show in Lemma~\ref{lem:OneInterval} that there exist only two intervals in the optimal auction: interval of $(0, D)$ with $x'(v) > 0$ and interval of $(D, v_{\mmax})$ with  $x'(v) = 0$. Finally, we will compute the optimal threshold valuation $D$ between these two intervals.

\begin{pro}
\label{prop:ExactAllocation}
If for some $v^* \in (0, v_{\mmax}]$, we have $x'(v)>0$ for all $v\in (0,v^*)$ in an optimal auction, then $x(\cdot)$ is continuous at $v^*$, and the allocation rule $x(v)$ for all $v \in [0, v^*]$ is given as:
$$x(v) = \left(\frac{v}{v^*}\right)^{\frac{1}{M-1}}x(v^*).$$ 
\end{pro}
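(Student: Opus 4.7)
The plan is to apply Lemmas~\ref{lem:pricing_opt} and~\ref{lem:TwoPossibility} on the interval $(0,v^*)$ to pin down a first-order ODE for $x(v)$, solve it explicitly, and then close the argument at the endpoint $v^*$ using the ROI constraint there.

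On $(0,v^*)$ the hypothesis $x'(v)>0$ rules out the first branch of Lemma~\ref{lem:TwoPossibility}, so the second branch must hold: $p(v)=v\,x(v)$ throughout. On the other hand, Lemma~\ref{lem:pricing_opt} asserts that in the optimal single-bidder auction $p(v)=M\myerson(v)=M v\,x(v)-M\int_0^v x(z)\diff z$. Equating the two expressions yields the integral identity
\[
M\int_0^v x(z)\diff z \,=\, (M-1)\,v\,x(v), \qquad v\in(0,v^*).
\]
Differentiating both sides (which is legitimate since $x$ is assumed differentiable on $(0,v^*)$) gives $M\,x(v)=(M-1)\bigl(x(v)+v\,x'(v)\bigr)$, equivalently $x'(v)/x(v)=1/((M-1)v)$. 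Integrating this separable ODE produces $x(v)=C\,v^{1/(M-1)}$ for some positive constant $C$, valid on $(0,v^*)$.

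To identify $C$ and establish continuity at $v^*$, set $L=\lim_{v\to v^{*-}}x(v)=C\,(v^*)^{1/(M-1)}$; a direct computation then gives $\int_0^{v^*} x(z)\diff z=\tfrac{M-1}{M}\,v^*L$. Applying Lemma~\ref{lem:pricing_opt} at $v^*$, the inequality $M\myerson(v^*)\le v^*x(v^*)$ becomes
\[
M\,v^*\,x(v^*)-(M-1)\,v^*\,L \,\le\, v^*\,x(v^*),
\]
which simplifies to $x(v^*)\le L$. Monotonicity of the allocation gives $x(v^*)\ge L$, so $x(v^*)=L$, and hence $C=x(v^*)/(v^*)^{1/(M-1)}$. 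This yields $x(v)=(v/v^*)^{1/(M-1)}\,x(v^*)$ on $(0,v^*]$; right-continuity combined with monotonicity then forces $x(0)=0$, extending the formula to the full closed interval $[0,v^*]$.

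The one delicate step is the continuity at $v^*$: the hypothesis $x'(v)>0$ is only on the open interval, so a priori the allocation could jump up at $v^*$, and Lemma~\ref{lem:TwoPossibility} on its own cannot exclude this (its $x'(v)=0$ branch would still hold at that single point). The decisive observation is that such a jump would add $M\,v^*\bigl(x(v^*)-L\bigr)$ to the Myerson payment while increasing the obtained value only by $v^*\bigl(x(v^*)-L\bigr)$, violating the ROI constraint at $v^*$.
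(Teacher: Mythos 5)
Your derivation of the ODE is the same as the paper's: on $(0,v^*)$ the hypothesis $x'(v)>0$ forces the first-price branch of Lemma~\ref{lem:TwoPossibility}, which together with Lemma~\ref{lem:pricing_opt} gives $M\myerson(v)=v\,x(v)$, and differentiating yields $x(v)=(M-1)v\,x'(v)$ and hence $x(v)=C\,v^{1/(M-1)}$. Where you genuinely depart from the paper is in establishing continuity at $v^*$. The paper handles a hypothetical jump by an exchange argument: it replaces the allocation on $[0,v^*)$ with the formula anchored at the right limit $x(v^*)$, observes this pointwise increases the allocation, and claims the revenue strictly improves without breaking DSIC, contradicting optimality. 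You instead apply the necessary condition of Lemma~\ref{lem:pricing_opt} directly \emph{at} the point $v^*$: since $\int_0^{v^*}x(z)\diff z=\tfrac{M-1}{M}v^*L$ with $L$ the left limit, the inequality $M\myerson(v^*)\le v^*x(v^*)$ collapses to $x(v^*)\le L$, and monotonicity gives the reverse inequality. This is correct (I verified the algebra, including the identity $(M-1)v^*x(v^*)\le (M-1)v^*L$), and it is arguably cleaner than the paper's route because it is purely local and does not require constructing a modified mechanism and re-verifying incentive compatibility and strict revenue improvement; the paper's version, on the other hand, makes the economic content of the step (a jump at $v^*$ leaves money on the table) more transparent. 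Both proofs share the same implicit caveat that ``an optimal auction'' here means the optimal auction whose existence is guaranteed by Lemma~\ref{lem:TwoPossibility}, so this is not a gap specific to your write-up.
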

\begin{proof}
We first assume $x(\cdot)$ is continuous at $v^*$, and we will prove later that, if it is discontinuous, we can  improve the revenue without violating the DSIC property.
By Lemma~\ref{lem:pricing_opt} and Lemma~\ref{lem:TwoPossibility}, we get that for all valuations $v \in [0, v^*]$, $M \myerson(v) - v x(v)=0$ always holds, that is,
$$M \left(v x(v) - \int_0^{v} x(z) \diff z \right) - v x(v)=0.$$
After transposition and derivation, this translates to
$$x(v) - (M-1) v x'(v)=0.$$
By solving this differential equation with the value of $x(v^*)$ at valuation $v^*$, we can get 
\begin{equation}
    \label{eq:ContinuousAllocation}
    x(v) = \left(\frac{v}{v^*}\right)^{\frac{1}{M-1}}x(v^*), \quad \forall v \in [0, v^*].
\end{equation}
Next, if $x(\cdot)$ is discontinuous at $v^*$, we need to replace $x(v^*)$ in~\eqref{eq:ContinuousAllocation} with $x(v^{*-})$, \emph{i.e.}, the left limit of $x(\cdot)$ at $v^*$ (recall that we denote $x(v^*)$ as the right limit when it is discontinuous). Since $x(v^{*-})< x(v^*)$, we can observe that directly using~\eqref{eq:ContinuousAllocation} as the allocation rule will increase the revenue without violating the DSIC property, which also makes $x(\cdot)$ continuous at $v^*$. 
This concludes the proof.
\end{proof}

\begin{lem}
\label{lem:OneInterval}
For a single ROI-constrained bidder with DMR value distribution, if there exist intervals with $x'(v)=0$ in an optimal auction, then there is exactly one such interval, and it appears in the highest value region.
\end{lem}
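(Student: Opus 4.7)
My plan is to argue by contradiction using a constructive modification and the single-crossing property induced by DMR. Assume the optimal auction has a maximal flat interval $(b_1, b_2)$ at height $c$ with $b_2 < v_{\mmax}$, so that $x$ must strictly increase just beyond $b_2$. I will construct a feasible allocation $x_{\mathrm{new}}$ with strictly fewer interior flat intervals whose revenue is at least that of the optimum; iteration then forces any remaining flat interval to lie at the top.

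First, I combine Proposition~\ref{prop:ExactAllocation} on the rising segment preceding $b_1$ (giving $x_{\mathrm{old}}(v)=c(v/b_1)^{1/(M-1)}$ on $(0,b_1)$ when $b_1>0$) with the IR/first-price analysis at $v=b_2^+$ via Lemmas~\ref{lem:TwoPossibility} and~\ref{lem:pricing_opt}, to obtain the sharp bound $x_{\mathrm{old}}(b_2^+) \le c(Mb_2-b_1)/((M-1)b_2)$, with equality iff rising resumes at $b_2$. The two cases immediately after $b_2$ are therefore: (i) a rising interval $(b_2, b_3)$, or (ii) a jump to a new plateau at some $c'\in\bigl(c,\, c(Mb_2-b_1)/((M-1)b_2)\bigr]$ on an interval $(b_2,b_2')$.

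Next, define the modification. In case (i), replace $x_{\mathrm{old}}$ on $(0,b_3)$ by the single power-law curve $x_{\mathrm{new}}(v)=c_3(v/b_3)^{1/(M-1)}$ with $c_3=x_{\mathrm{old}}(b_3)$. In case (ii), replace $x_{\mathrm{old}}$ on $(0,b_2')$ by $x_{\mathrm{new}}(v)=c'(v/b^\ast)^{1/(M-1)}$ on $(0,b^\ast)$ followed by $x_{\mathrm{new}}\equiv c'$ on $(b^\ast,b_2')$, where $b^\ast\in(b_1,b_2]$ is determined by integral matching $\int_0^{b_2'} x_{\mathrm{new}}\, dv = \int_0^{b_2'} x_{\mathrm{old}}\, dv$; a direct calculation shows this $b^\ast$ is well defined precisely because $c'$ respects the feasibility bound above. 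In both cases the first-price rule holds on the new rising segment by construction, and matching the integral at the transition point preserves $\myerson$ there, so all downstream ROI/IR constraints are inherited from $x_{\mathrm{old}}$.

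The revenue comparison reduces to showing that $\Delta(v):=x_{\mathrm{new}}(v)-x_{\mathrm{old}}(v)$ enjoys the single-crossing property: $\Delta$ is non-positive on an initial segment and non-negative thereafter. On $(0,b_1)$ both allocations are of the form $Cv^{1/(M-1)}$, and the needed coefficient comparison $C_{\mathrm{new}}\le C_{\mathrm{old}}$ reduces to the elementary inequality
\[
g(r) := \frac{(M-r)\, r^{1/(M-1)}}{M-1} \le 1 \quad \text{for } r = b_1/b^\ast \in (0,1],
\]
which follows from $g(1)=1$ together with $g'(r)\ge 0$ on $(0,1]$. On $(b_1,b_2)$, $x_{\mathrm{new}}$ rises monotonically through the constant level $c$, giving a unique crossing point $v^\dagger$. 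Given single crossing and $\int\Delta\,dv=0$ (by construction), DMR ($\psi$ nondecreasing) yields the classical inequality
\[
\int \psi(v)\, \Delta(v) \, dv \;\ge\; \psi(v^\dagger) \int \Delta(v)\, dv \;=\; 0,
\]
so the modified auction weakly improves revenue. The main obstacle I anticipate is verifying the single-crossing property uniformly in case (ii), which requires the auxiliary inequality $rs^M - Ms + M - r \le 0$ throughout the feasible range $s=c'/c\in\bigl(1,(M-r)/(M-1)\bigr]$; this in turn reduces, via a short monotonicity argument on the derivative $rMs^{M-1}-M$, back to the same inequality $g(r)\le 1$. Iterating the modification strictly decreases the number of interior flat intervals each time, so it terminates with at most a single flat interval, necessarily at the top.
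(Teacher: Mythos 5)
Your proposal is correct and follows essentially the same route as the paper's proof: pin down the power-law prefix via Proposition~\ref{prop:ExactAllocation} and the post-plateau level via Lemmas~\ref{lem:pricing_opt} and~\ref{lem:TwoPossibility}, replace the prefix-plus-plateau by a single equal-area power law, and use single crossing together with the monotonicity of $\psi$ to conclude the modification does not lose revenue. The remaining differences are cosmetic: you verify the coefficient comparison by the explicit inequality $g(r)\le 1$ where the paper infers the sign from the equal-area identity, you carry an unrealizable sub-case (Lemma~\ref{lem:TwoPossibility} applied at the jump point forces $c'$ to equal your upper bound exactly, so only the boundary of case (ii) can occur), and you close with a weak-improvement-plus-iteration argument where the paper derives a one-step strict contradiction from the first interior flat interval.
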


\begin{proof}
    Assume by contradiction that there exist multiple intervals with $x'(v) = 0$ in an optimal auction. Pick $(\underline{v}, \bar{v})$ to be the first such interval. 
    That is, we have $x'(v) > 0$ for all $v \in (0, \underline{v})$, $x'(v) = 0$ for all $v \in (\underline{v}, \bar{v})$, and $\bar{v}<v_{\mmax}$.

    First, we must have $\underline{v} > 0$. That is, the allocation rule cannot start with a flat interval. To see why this is true, assume otherwise that $\underline{v} = 0$. We focus on a point $v' = \bar{v} + \delta$ in the next interval for a sufficiently small $\delta > 0$ such that $v' < M\bar{v}$. Then there are two cases: (1) $x'(v')=0$ and (2) $x'(v')$ is strictly positive. In the first case, we will have $x(\bar{v}) > 0$, and the Myerson price at $\bar{v}$ will be $\myerson(\bar{v}) = \bar{v} \cdot x(\bar{v}) < M \myerson(\bar{v})$, which directly contradicts Lemma~\ref{lem:pricing_opt}. In the second case, we look at the payment $p(v')$ at point $v'$. Note that since $x'(v) > 0$, by Lemma~\ref{lem:pricing_opt} and Lemma~\ref{lem:TwoPossibility}, we should have $M\myerson(v') = v'x(v')$. But this cannot happen because
    \begin{equation*}
        \begin{aligned}
        M\myerson(v') &= M \left(v' x(v') - \int_0^{v'} x(z) \diff z \right) = M \left(v' x(v') - \int_{\bar{v}}^{v'} x(z) \diff z \right)\\      
        &> M \left(v' x(v') - (v' - \bar{v}) x(v') \right) = M \bar{v} x(v') > v' x(v').
        \end{aligned}
    \end{equation*}
    
    Knowing $\underline{v} > 0$, by Proposition~\ref{prop:ExactAllocation}, we know $x(\cdot)$ is continuous at $\underline{v}$, and the allocation in $[0,\underline{v}]$ is given as:
    $$x(v) = \left(\frac{v}{\underline{v}}\right)^{\frac{1}{M-1}}x(\underline{v}), \quad \forall v \in [0, \underline{v}].$$
    Next, combined with $p(\bar{v})= \bar{v} x(\bar{v})$, we have 
    $M \myerson(\bar{v}) - \bar{v}x(\bar{v}) = M \myerson(\underline{v}) - \underline{v}x(\underline{v})$,
    that is,
    \begin{equation}
    \label{eq:equal_underline_bar}
        x(\bar{v}) = \frac{M\bar{v} - \underline{v}}{(M-1)\bar{v}} \cdot x(\underline{v}).
    \end{equation}
    
    In order to argue that allocation rule $x(\cdot)$ is not revenue-maximizing, we construct a new allocation rule as
    \begin{equation}
    \label{eq:bar_x}
            \bar{x}(v) = \left\{
            \begin{array}{lr}
                x(v) & \text{if }  v \geq \bar{v} \\
                \displaystyle \left(\frac{v}{\bar{v}}\right)^{\frac{1}{M-1}}x(\bar{v}) & \quad\text{if } v \in [0, \bar{v}).
            \end{array}
            \right.
        \end{equation}
    That is, we replace the first price interval $[0, \underline{v}]$ and the flat interval $[\underline{v}, \bar{v}]$ in $x(\cdot)$ with a single first-price interval $[0, \bar{v}]$ in $\bar{x}(\cdot)$. 
    Comparing the two allocation rules $x(\cdot)$ and $\bar{x}(\cdot)$, 
    we first note from Lemma~\ref{lem:pricing_opt} and Lemma~\ref{lem:TwoPossibility} that 
    $p(\bar{v}) = M \myerson(\bar{v}) = \bar{v} x(\bar{v}) = \bar{v} \bar{x}(\bar{v})$,
    which indicates that
    \begin{equation}
    \label{eq:SameArea}
        \int_{0}^{\bar{v}} x(z) \diff z  = \int_{0}^{\bar{v}} \bar{x}(z) \diff z,
    \end{equation}
    because both sides equal to $\bar{v}x(\bar{v})(M-1)/M$.
    Next, we have for $v\in [0, \underline{v}]$,
    \begin{equation*}
        \begin{aligned}
            \bar{x}(v) - x(v) &= \left(
            \frac{v}{\bar{v}}\right)^{\frac{1}{M-1}}x(\bar{v})
            - \left(\frac{v}{\underline{v}}\right)^{\frac{1}{M-1}}x(\underline{v})\\
          & = v^{\frac{1}{M-1}}\cdot x(\underline{v})\cdot \left( \left(\frac{1}{\underline{v}}\right)^{\frac{1}{M-1}} - \left(\frac{1}{\bar{v}}\right)^{\frac{1}{M-1}} \cdot \frac{M\bar{v} - \underline{v}}{(M-1)\bar{v}} \right).
        \end{aligned}
    \end{equation*}
    Since $v$ only appears in the first term, $\bar{x}(v) - x(v)$ must be constantly positive or negative in $(0, \underline{v}]$, determined by the last term. Combined with~\eqref{eq:SameArea} and $x'(v)=0, \forall v\in(\underline{v}, \bar{v})$, we know it is constantly negative, \emph{i.e.}, $\bar{x}(v)<x(v)$ for all $v\in (0,\underline{v}]$. Therefore, there exists a threshold $v^*\in (\underline{v}, \bar{v})$ such that $x(v)> \bar{x}(v)$ for all $v\in [0, v^*)$ and $x(v)\leq \bar{x}(v)$ for all $v\in [v^*, \bar{v})$. 
    Combining this with Equation~\eqref{eq:SameArea}, we have
    \begin{equation}
    \label{eq:SameArea_2}
        \int_0^{v^*} (x(z) - \bar{x}(z)) \diff z = \int_{v^*}^{\bar{v}} (\bar{x}(z) - x(z)) \diff z > 0.
    \end{equation}    
    Next, for any allocation rule $x(\cdot)$, we denote
    \begin{equation*}
        \rev_{[0,\bar{v}]}^{x(\cdot)} = M \int_{0}^{\bar{v}} \psi(z) x(z) \diff z,
    \end{equation*}
    and we can compare the revenue generated from $x(\cdot)$ and $\bar{x}(\cdot)$ in the interval $[0, \bar{v}]$:
    \begin{equation*}
        \begin{aligned}
            \rev_{[0,\bar{v}]}^{x(\cdot)} - \rev_{[0,\bar{v}]}^{\bar{x}(\cdot)} & = M \left( \int_{0}^{v^*} \psi(z) (x(z) - \bar{x}(z)) \diff z - \int_{v^*}^{\bar{v}} \psi(z) ( \bar{x}(z) - x(z)) \diff z \right).
        \end{aligned}
    \end{equation*}
Finally, we can see that this difference is always negative due to Equation~\eqref{eq:SameArea_2} and the fact that $\psi(\cdot)$ is non-decreasing.\footnote{We omit an ill-defined special case where $\psi(v)=0, \forall v\in[0,\bar{v}]$, since in such case $\psi(v)$ will be infinity when $v\rightarrow 0$.} Fig.~\ref{fig:same_area} demonstrates the idea of this argument.
    
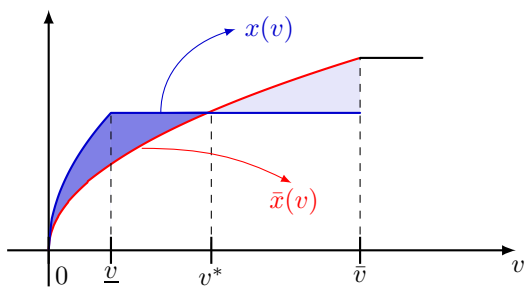
\begin{figure}
\centering

\begin{tikzpicture}[yscale=2, xscale=2.3]
  \def\textscale{1}
  \def\xmax{2.7} 
  \def\ymax{1.6} 
  \def\kx{0.8*\xmax}
  \def\ky{0.8*\ymax} 
  \def\kys{1.0/1.4*\ky}
  \def\a{1.0/6.0*\kx} 
  \def\b{5.0/6.0*\kx} 
  \fill[myblue!50] (\a,0) -- plot[domain=0:\a] (\x, {(\x/(\a))^0.5*\kys});
  \fill[myblue!10] (\b,\kys) -- plot[domain=0.94/2.7*\xmax:\b] (\x, {(\x/(\b))^0.5*\ky});
  \fill[myblue!50] (\a,\kys) -- plot[domain=\a:0.94/2.7*\xmax] (\x, {(\x/(\b))^0.5*\ky});
  \fill[white] (\a,0) -- plot[domain=0:\a] (\x, {(\x/(\b))^0.5*\ky});
  \draw[->,thick] (0,-0.15*\ymax) -- (0,\ymax);
  \draw[->,thick] (-0.15*\ymax,0) -- (\xmax,0) node[right=1,below,scale=\textscale] {$v$};
  \draw[yline,domain =0:\b,smooth,variable=\x,samples=500] plot (\x, {(\x/(\b))^0.5*\ky});
  \draw[xline,domain =0:\a,smooth,variable=\x] plot (\x, {(\x/(\a))^0.5*\kys});
  \draw[xline,line cap=round] (\a,\kys) -- (\b,\kys);
  \draw[xline,color=black,line cap=round] (\b,\ky) -- (\kx,\ky);
  \draw[dashed] (\a,0) -- (\a,\kys);
  \draw[dashed] (\b,0) -- (\b,\ky);
  \draw[dashed] (0.94/2.7*\xmax,0) -- (0.94/2.7*\xmax,\kys);
  \draw[anotline,color=red,line cap=round] (0.25*\kx,0.53*\ky) to [out=0,in=150] (0.65*\kx,0.35*\ky) node[left=-1,below=-1.7,scale=\textscale] {$\bar{x}(v)$};
  \draw[anotline,color=myblue,line cap=round] (0.3*\kx,\kys) to [out=90,in=200] (0.5*\kx,1.15*\ky) node[right=0,scale=\textscale] {$x(v)$};
  \tick{\a,0}{90} node[below=-2.5,scale=\textscale] {$\underline{v}$};
  \tick{\b,0}{90} node[below=-2.5,scale=\textscale] {$\bar{v}$};
  \tick{0.94/2.7*\xmax,0}{90} node[below=-2.5,scale=\textscale] {$v^*$};
  \tick{0,0}{90} node[left=-5,below=-2.5,scale=\textscale] {$0$};  
\end{tikzpicture}
\caption{An illustration for the proof of Lemma~\ref{lem:OneInterval} with $M=3$. The blue line denotes an allocation rule $x(\cdot)$ with $x'(v)=0$ in $(\underline{v}, \bar{v})$ where $\bar{v}<v_{\mmax}$, and $x(v)$ for $v\in [0,\underline{v})$ is computed by Proposition~\ref{prop:ExactAllocation}. The red line denotes our constructed allocation rule $\bar{x}(\cdot)$ as given in~\eqref{eq:bar_x}. The point $v^*$ denotes the intersection of  $x(\cdot)$ and $\bar{x}(\cdot)$. By Equation~\eqref{eq:SameArea_2} we have the areas of the two shadowed regions are the same. Furthermore, as $\psi(\cdot)$ is non-decreasing, we can conclude that $\bar{x}(\cdot)$ leads to a higher revenue than $x(\cdot)$.}
\label{fig:same_area}
\end{figure}

Therefore, we conclude that $\bar{x}(\cdot)$ generates a higher revenue than $x(\cdot)$, contradicting the fact that $(x, p)$ is an optimal auction. This completes the proof.
\end{proof}

We now proceed to complete the proof of Theorem~\ref{thm:opt}.
\begin{proof}[Proof of Theorem~\ref{thm:opt}]
    By Lemma~\ref{lem:OneInterval}, we know $x'(v)>0$ for all valuations $v\in (0,D)$ and $x'(v)=0$ for all valuations $v\in (D,v_{\mmax})$.
    First, we have $x(D)=1$, since otherwise, the revenue could be improved by setting $x(v) = 1$ for all  $v\in[D,v_{\mmax}]$.
    Next, we find the optimal threshold $D^*$ that maximizes the overall revenue.
    By Proposition~\ref{prop:ExactAllocation}, the allocation rule for valuations $v\in [0, D]$ is
    $$x(v)=\left(\frac{v}{D}\right)^{\frac{1}{M-1}} \cdot x(D) = \left(\frac{v}{D}\right)^{\frac{1}{M-1}}, \forall v\in [0, D].$$
    And the overall revenue is 
    $$\rev=\int_{0}^{D} \psi(v) \left(\frac{v}{D}\right)^{\frac{1}{M-1}}  \diff v+\int_{D}^{v_{\mmax}}\psi(v) \diff v.$$
    In order to maximize this revenue, we compute its derivative 
    \begin{equation}
    \label{eq:RevDerivation}
        \frac{\diff \rev}{\diff D} = -\frac{1}{M-1}\cdot D^{-\frac{M}{M-1}}\cdot \int_0^{D}\psi(v) v^{\frac{1}{M-1}} \diff v.
    \end{equation}
    Looking at this derivative, we see that the term $-\frac{1}{M-1}\cdot D^{-\frac{M}{M-1}}$ is always negative and $v^{\frac{1}{M-1}}$ is always positive. Furthermore, we have $\psi(0)=-1<0$ and  $\psi(v_{\mmax})>0$. By the DMR condition, $\psi(\cdot)$ is non-decreasing. Therefore, we have the following two cases:
    \begin{itemize}
        \item If $\int_0^{v_{\mmax}}\psi(v)v^{\frac{1}{M-1}} \diff v>0$, then by letting $\int_0^{D^{*}}\psi(v)v^{\frac{1}{M-1}}\diff v=0$, the revenue will increase with $D$ until $D^{*}$ and then decrease. Therefore, the optimal solution is achieved at $D= D^{*}$.
        \item If $\int_0^{v_{\mmax}}\psi(v)v^{\frac{1}{M-1}} \diff v\leq 0$, then the revenue will always increase with $D$ until $v_{\mmax}$. Therefore, the optimal solution is achieved at $D= v_{\mmax}$.
    \end{itemize}
    This completes the proof.
\end{proof}

\section{Conclusion}
\label{sec8}
In this paper we discuss optimal auction design for bidders who have ex post ROI constraints. We provide characterizations for DSIC auctions and optimal auctions with a single bidder in this setting. We show that the optimal auction may entail a randomized allocation scheme even in the simple single-bidder setting. 

There are several important open questions left in this model. The first and foremost one is to characterize the optimal auction with a single item and multiple ROI-constrained bidders. As we have discussed in the paper, this would require us to go beyond the virtual welfare maximization regime in the Myerson auction setting. We believe such characterization could shed light on the mechanism design for bidders with non-quasilinear utility functions and provide useful insights for practical applications such as online advertising.
Another direction is to study ex post ROI constraints when the target ratio $M_i$ is also private information of bidder $i$. This brings the problem to the domain of multidimensional mechanism design, which is often challenging in the mechanism design literature. Here one possible approach is to identify conditions under which some ``simple'' and deterministic auctions are optimal or close to optimal.

\section{Acknowledgement}
We thank Hu Fu, Yiqing Wang, Yidan Xing, Xiangyu Liu and anonymous reviewers for their insightful and helpful suggestions.
This work was supported in part by National Key R\&D Program of China No. 2021YFF0900800, in part by China NSF grant No. 62220106004, 62322206, 62132018, 62272307, 61972254, 62025204, 62302267, in part by the Natural Science Foundation of Shandong (No. ZR202211150156, ZR2021LZH006), in part by Alibaba Group through Alibaba Innovative Research Program, and in part by Tencent Rhino Bird Key Research Project. The opinions, findings, conclusions, and recommendations expressed in this paper are those of the authors and do not necessarily reflect the views of the funding agencies or the government.

%
%
%
\bibliographystyle{splncs04}
\bibliography{mybibliography}

\begin{thebibliography}{10}
\providecommand{\url}[1]{\texttt{#1}}
\providecommand{\urlprefix}{URL }
\providecommand{\doi}[1]{https://doi.org/#1}

\bibitem{aggarwal2019autobidding}
Aggarwal, G., Badanidiyuru, A., Mehta, A.: Autobidding with constraints. In: International Conference on Web and Internet Economics. pp. 17--30. Springer (2019)

\bibitem{auerbach2008empirical}
Auerbach, J., Galenson, J., Sundararajan, M.: An empirical analysis of return on investment maximization in sponsored search auctions. In: Proceedings of the 2nd International Workshop on Data Mining and Audience Intelligence for Advertising. pp.~1--9 (2008)

\bibitem{babaioff2021non}
Babaioff, M., Cole, R., Hartline, J., Immorlica, N., Lucier, B.: Non-quasi-linear agents in quasi-linear mechanisms. Leibniz international proceedings in informatics  \textbf{185} (2021)

\bibitem{balseiro2021robust}
Balseiro, S., Deng, Y., Mao, J., Mirrokni, V., Zuo, S.: Robust auction design in the auto-bidding world. Advances in Neural Information Processing Systems  \textbf{34},  17777--17788 (2021)

\bibitem{balseiro2019black}
Balseiro, S., Golrezaei, N., Mirrokni, V., Yazdanbod, S.: A black-box reduction in mechanism design with private cost of capital. Available at SSRN 3341782  (2019)

\bibitem{balseiro2021landscape}
Balseiro, S.R., Deng, Y., Mao, J., Mirrokni, V.S., Zuo, S.: The landscape of auto-bidding auctions: Value versus utility maximization. In: Proceedings of the 22nd ACM Conference on Economics and Computation. pp. 132--133 (2021)

\bibitem{borgs2007dynamics}
Borgs, C., Chayes, J., Immorlica, N., Jain, K., Etesami, O., Mahdian, M.: Dynamics of bid optimization in online advertisement auctions. In: Proceedings of the 16th international conference on World Wide Web. pp. 531--540 (2007)

\bibitem{borgs2005multi}
Borgs, C., Chayes, J., Immorlica, N., Mahdian, M., Saberi, A.: Multi-unit auctions with budget-constrained bidders. In: Proceedings of the 6th ACM Conference on Electronic Commerce. pp. 44--51 (2005)

\bibitem{brynjolfsson2011goodbye}
Brynjolfsson, E., Hu, Y., Simester, D.: Goodbye pareto principle, hello long tail: The effect of search costs on the concentration of product sales. Management Science  \textbf{57}(8),  1373--1386 (2011)

\bibitem{cavallo2017sponsored}
Cavallo, R., Krishnamurthy, P., Sviridenko, M., Wilkens, C.A.: Sponsored search auctions with rich ads. In: Proceedings of the 26th International Conference on World Wide Web. pp. 43--51 (2017)

\bibitem{che1998standard}
Che, Y.K., Gale, I.: Standard auctions with financially constrained bidders. The Review of Economic Studies  \textbf{65}(1),  1--21 (1998)

\bibitem{deng2021towards}
Deng, Y., Mao, J., Mirrokni, V., Zuo, S.: Towards efficient auctions in an auto-bidding world. In: Proceedings of the Web Conference 2021. pp. 3965--3973 (2021)

\bibitem{devanur2017optimal2}
Devanur, N.R., Haghpanah, N., Psomas, C.A.: Optimal multi-unit mechanisms with private demands. In: Proceedings of the 2017 ACM Conference on Economics and Computation. pp. 41--42 (2017)

\bibitem{fiat2016fedex}
Fiat, A., Goldner, K., Karlin, A.R., Koutsoupias, E.: The fedex problem. In: Proceedings of the 2016 ACM Conference on Economics and Computation. pp. 21--22 (2016)

\bibitem{golrezaei2021bidding}
Golrezaei, N., Jaillet, P., Liang, J.C.N., Mirrokni, V.: Bidding and pricing in budget and roi constrained markets. arXiv preprint arXiv:2107.07725  (2021)

\bibitem{golrezaei2021auction}
Golrezaei, N., Lobel, I., Paes~Leme, R.: Auction design for roi-constrained buyers. In: Proceedings of the Web Conference 2021. pp. 3941--3952 (2021)

\bibitem{hart2015maximal}
Hart, S., Reny, P.J.: Maximal revenue with multiple goods: Nonmonotonicity and other observations. Theoretical Economics  \textbf{10}(3),  893--922 (2015)

\bibitem{heymann2019cost}
Heymann, B.: Cost per action constrained auctions. In: Proceedings of the 14th Workshop on the Economics of Networks, Systems and Computation. pp.~1--8 (2019)

\bibitem{laffont1996optimal}
Laffont, J.J., Robert, J.: Optimal auction with financially constrained buyers. Economics Letters  \textbf{52}(2),  181--186 (1996)

\bibitem{li2020incentive}
Li, B., Yang, X., Sun, D., Ji, Z., Jiang, Z., Han, C., Hao, D.: Incentive mechanism design for roi-constrained auto-bidding. arXiv preprint arXiv:2012.02652  (2020)

\bibitem{li2022auto}
Li, J., Tang, P.: Auto-bidding equilibrium in roi-constrained online advertising markets. arXiv preprint arXiv:2210.06107  (2022)

\bibitem{malakhov2008optimal}
Malakhov, A., Vohra, R.V.: Optimal auctions for asymmetrically budget constrained bidders. Review of Economic Design  \textbf{12}(4),  245--257 (2008)

\bibitem{mehta2022auction}
Mehta, A.: Auction design in an auto-bidding setting: Randomization improves efficiency beyond vcg. In: Proceedings of the ACM Web Conference 2022. pp. 173--181 (2022)

\bibitem{myerson1981optimal}
Myerson, R.B.: Optimal auction design. Mathematics of operations research  \textbf{6}(1),  58--73 (1981)

\bibitem{pavlov2011optimal}
Pavlov, G.: Optimal mechanism for selling two goods. The BE Journal of Theoretical Economics  \textbf{11}(1) (2011)

\bibitem{szymanski2006impact}
Szymanski, B.K., Lee, J.S.: Impact of roi on bidding and revenue in sponsored search advertisement auctions. In: Second Workshop on Sponsored Search Auctions. pp.~1--8 (2006)

\bibitem{tillberg2020optimal}
Tillberg, E., Marbach, P., Mazumdar, R.: Optimal bidding strategies for online ad auctions with overlapping targeting criteria. vol.~4, pp. 1--55. ACM New York, NY, USA (2020)

\bibitem{wilkens2017gsp}
Wilkens, C.A., Cavallo, R., Niazadeh, R.: Gsp: the cinderella of mechanism design. In: Proceedings of the 26th International Conference on World Wide Web. pp. 25--32 (2017)

\end{thebibliography}

\appendix


\newpage

\section{Proof of Proposition~\ref{prop:observation}}
\label{appen:prop_1}
\begin{proof}
    First, we have $M_i \myerson_i(0) - 0\cdot x_i(0)=0$ since $x_i(0)=0$ and $p_i(0)=0$. Then we can observe that 
    $$\max_{0 \leq z \leq v}\{M_i \myerson_i(z) - z x_i(z)\}\geq M_i \myerson_i(0) - 0\cdot x_i(0) = 0.$$
    Therefore, $p_i(v) \leq M_i \myerson_i(v)$. Similarly, we have 
    $$\max_{0 \leq z \leq v}\{M_i \myerson_i(z) - z x_i(z)\}\geq M_i \myerson_i(v) - v x_i(v),$$
    hence 
    $$p_i(v)\leq M_i \myerson_i(v) - (M_i \myerson_i(v) - v x_i(v)) = v x_i(v).$$

    For the second statement, we prove that for any values $v<v'\in [0,v_{max}]$, $p_i(v)\leq p_i(v')$. we consider the following two cases:
    \begin{itemize}
        \item When $$\max_{0 \leq z \leq v}\{M_i \myerson_i(z) - z x_i(z)\} = \max_{0 \leq z \leq v'}\{M_i \myerson_i(z) - z x_i(z)\},$$
        we have $$p_i(v') - p_i(v) = M_i (\myerson_i(v')-\myerson_i(v))\geq 0.$$
        \item When $$\max_{0 \leq z \leq v}\{M_i \myerson_i(z) - z x_i(z)\} < \max_{0 \leq z \leq v'}\{M_i \myerson_i(z) - z x_i(z)\},$$
        we denote 
        $$v^* = \argmax_{0 \leq z \leq v'}\{M_i \myerson_i(z) - z x_i(z)\}.$$
        Then we have that $$p_i(v^*) =  M_i \myerson_i(v^*) - (M_i \myerson_i(v^*) - v^* x_i(v^*)) = v^* x_i(v^*),$$
        and 
        $$\max_{0 \leq z \leq v^*}\{M_i \myerson_i(z) - z x_i(z)\} = \max_{0 \leq z \leq v'}\{M_i \myerson_i(z) - z x_i(z)\}.$$
        By the analysis of the above case, we have $$p_i(v')\geq p_i(v^*) =v^* x_i(v^*)> v x_i(v) \geq p_i(v).$$
    \end{itemize}
\end{proof}

\section{Proof of Monotone Allocation Rule (Step 1 of Theorem~\ref{thm:characterization})}
\label{appen:step_1}
\begin{lem}[Monotone Allocation Rule]
 \label{lem:monoto}
 For ex post ROI-constrained bidders, if a mechanism is DSIC, then the allocation rule is monotonically non-decreasing, that is, $x_i(v)\leq x_i(v')$ for all $v<v'$ and bidder $i$.
 \end{lem}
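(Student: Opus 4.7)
The plan is to mimic the standard Myerson argument: combine the two DSIC inequalities at values $v<v'$ and exploit symmetry to obtain monotonicity. The wrinkle here is the ex post ROI constraint, which sends the deviator's utility to $-\infty$ whenever the misreport makes the payment exceed the obtained value, thus making the corresponding DSIC inequality vacuous. I would therefore split into cases according to whether the upward deviation (a bidder of true value $v$ pretending to have value $v'$) is individually rational.

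The downward deviation is always safe: a bidder with true value $v'$ reporting $v<v'$ incurs payment $p(v)$, and IR at $v$ together with $v'>v$ gives $v' x(v)\geq v x(v)\geq p(v)$. So DSIC at $v'$ yields the finite inequality
$$M v' x(v')-p(v')\ \geq\ M v' x(v)-p(v).\qquad(\star)$$
Assume for contradiction that $x(v)>x(v')$. If the upward deviation is also IR feasible, i.e.\ $v x(v')\geq p(v')$, then DSIC at $v$ gives $M v x(v)-p(v)\geq M v x(v')-p(v')$, and adding this to $(\star)$ produces $M(v'-v)(x(v')-x(v))\geq 0$, contradicting $x(v)>x(v')$. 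This is the textbook Myerson step.

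If instead $v x(v')<p(v')$, then DSIC at $v$ degenerates and we must use the ROI slack. Rearranging $(\star)$ gives $p(v)\geq p(v')+Mv'(x(v)-x(v'))$. Stacking IR at $v$ (namely $p(v)\leq v x(v)$) on the left and the case hypothesis $p(v')>v x(v')$ on the right, then cancelling and dividing by the positive quantity $x(v)-x(v')$, yields $v>Mv'$, which is impossible since $v<v'$ and $M>1$.

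The main obstacle is this second case, where one of the two standard incentive inequalities disappears and monotonicity has to be recovered from the interplay between IR and the strict ROI multiplier $M>1$. The argument relies crucially on the assumption $M>1$ from the problem statement: the slack provided by the ROI multiplier is exactly what lets us rule out the ``IR-infeasible upward deviation'' scenario without an explicit counterpart incentive inequality, reducing the problem back to a Myerson-style combination of two finite inequalities in every case.
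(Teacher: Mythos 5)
Your proof is correct and follows essentially the same route as the paper's: both arguments note that the downward deviation is always IR-feasible, split on whether the upward deviation is IR-feasible, and derive the same two contradictions from the same algebraic combinations. One small quibble: the final contradiction $v>Mv'$ only needs $Mv'\geq v'>v$, so $M\geq 1$ already suffices there and the strictness $M>1$ is not actually what saves the second case.
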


Before the proof, we first present elementary inequalities derived from the property of truthfulness for ex post ROI-constrained bidders.

Given values $v$, $v'$ with $v<v'$, by the truthfulness requirement, when a bidder $i$ with value $v$ misreports $v'$, we have 
\begin{equation}
\label{eq3}
    M_i v x_i(v') - p_i(v')  \leq M_i v x_i(v) - p_i(v),
\end{equation}
or 
\begin{equation}
\label{eq4}
    p_i(v')>v x_i(v').
\end{equation}
Similarly, considering bidder $i$ with value $v'$ misreports $v$, we obtain that 
\begin{equation}
\label{eq1}
    M_i v' x_i(v) - p_i(v) \leq M_i v' x_i(v') - p_i(v'),
\end{equation}
or 
\begin{equation}
\label{eq2}
    p_i(v)>v' x_i(v).
\end{equation}
One can observe that (\ref{eq2}) contradicts the IR requirement since $p_i(v)\leq v x_i(v)< v' x_i(v)$. Hence, in a truthful mechanism, (\ref{eq1}) must hold, and at least one of (\ref{eq3}) and (\ref{eq4}) holds. On the basis of this elementary analysis, we now prove Lemma~\ref{lem:monoto}.

\begin{proof}
We prove the monotonicity by contradiction. Let $v<v'$, we assume $x_i(v)>x_i(v')$ in a truthful mechanism. Then we prove that (\ref{eq3}) and (\ref{eq1}) are not compatible. By (\ref{eq1}), we get
\begin{equation}
\label{eq5}
     p_i(v) - p_i(v')  \geq M_i v' \cdot (x_i(v) - x_i(v')).
\end{equation}
Similarly, by  (\ref{eq3}), we obtain
\begin{equation}
\label{eq6}
     p_i(v) - p_i(v')  \leq M_i v \cdot (x_i(v) - x_i(v')).
\end{equation}
Combining (\ref{eq5}) and (\ref{eq6}), we have $M_i v' \cdot (x_i(v) - x_i(v'))\leq M_i v \cdot (x_i(v) - x_i(v'))$. However, since $x_i(v) > x_i(v')$ and $v'>v$, we can derive a contradiction. 

Next, we continue to prove that (\ref{eq4}) and (\ref{eq1}) are not compatible. By the IR property, we have 
\begin{equation}
\label{eq7}
     p_i(v)\leq v x_i(v).
\end{equation}
Combining (\ref{eq4}), (\ref{eq1}), and (\ref{eq7}), we obtain 
\begin{equation*}
     M_i v' x_i(v) - v x_i(v)<M_i v' x_i(v') - v x_i(v'),
\end{equation*}
that is, 
\begin{equation}
\label{eq8}
     (M_i v' - v) \cdot x_i(v)<(M_i v' - v) \cdot x_i(v').
\end{equation}
As we have $x_i(v)>x_i(v')$ and $v<v'$, (\ref{eq8}) does not hold. In conclusion, we obtain that a mechanism could not be truthful if $x_i(v)>x_i(v')$, and hence, the allocation function $x_i(v)$ must be monotonously non-decreasing.
\end{proof}

 \section{Proof of Uniqueness of Payment  (Step 3 of Theorem~\ref{thm:characterization})}
\label{appen:step_3}
\begin{lem}[Uniquessness of Payment]
    Given any monotonically non-decreasing allocation rule $x(\cdot)$ and $p(0)=0$, there exists a unique payment rule $p(\cdot)$ such that $(x, p)$ is DSIC.
\end{lem}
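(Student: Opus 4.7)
The plan is to prove that any DSIC payment rule $p$ for the fixed monotone allocation $x(\cdot)$ with $p(0)=0$ must coincide with the explicit payment
\[ A(v) \;:=\; M\myerson(v) \;-\; \max_{0 \le z \le v}\{M\myerson(z) - z x(z)\} \]
already verified to be DSIC in step~(2). The approach is a two-sided sandwich: to show $p(v)\le A(v)$ using downward IC together with IR at the argmax point, and to show $p(v)\ge A(v)$ using upward IC starting from the last IR-binding point below $v$.

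For the upper bound, fix any $v$ and pick $z^* \in \argmax_{0\le z\le v}\{M\myerson(z) - zx(z)\}$ (the maximum is attained since the expression is upper semi-continuous on $[0,v]$), so that $A(v) = z^* x(z^*) + M(\myerson(v) - \myerson(z^*))$. For any $z<v$, a bidder with value $v$ reporting $z$ never violates IR, since $p(z)\le z x(z)\le v x(z)$; hence the downward IC inequality $p(v) - p(z) \le M v(x(v)-x(z))$ holds unconditionally. Applying this along a refining partition of $[z^*,v]$ and passing to the limit yields $p(v) - p(z^*) \le M\int_{z^*}^v z\,dx(z) = M(\myerson(v) - \myerson(z^*))$ by Riemann--Stieltjes integration by parts. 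Combined with IR $p(z^*) \le z^* x(z^*)$, this gives $p(v)\le A(v)$.

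For the lower bound, define $\bar z := \sup\{z \in [0,v] : p(z) = zx(z)\}$, which is well-defined because $p(0)=0=0\cdot x(0)$ places $0$ in this set. A right-limit argument using the paper's right-continuity convention and IR places $\bar z$ (in a right-limit sense) into this IR-binding set. Next, at every $w \in (\bar z, v]$, the upward IC disjunction for a bidder with value $w$ reporting $w+dw$ reads: either $p(w+dw) - p(w) \ge M w\,dx(w)$, or else $p(w+dw) > w\,x(w+dw)$. The second alternative cannot survive the limit $dw\to 0^+$, because it would force $p(w) \ge wx(w)$ and hence $p(w) = wx(w)$ by IR, placing $w > \bar z$ into the IR-binding set and contradicting the definition of $\bar z$. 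Therefore only the Myerson-increment branch can hold on $(\bar z, v]$; integrating it yields $p(v) - p(\bar z) \ge M(\myerson(v) - \myerson(\bar z))$, i.e., $p(v) \ge \bar z x(\bar z) + M(\myerson(v) - \myerson(\bar z))$. Since $z^*$ maximizes $M\myerson(z) - zx(z)$, we have $M\myerson(\bar z) - \bar z x(\bar z) \le M\myerson(z^*) - z^* x(z^*)$, which rearranges exactly to $\bar z x(\bar z) + M(\myerson(v) - \myerson(\bar z)) \ge A(v)$; hence $p(v)\ge A(v)$.

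The main obstacle will be the careful bookkeeping around right-continuity and the possible jumps of $x(\cdot)$ at $\bar z$, permitted by the paper's convention $x(v) = \lim_{z\to v^+} x(z)$. Specifically, one must justify that $\bar z$ behaves like an IR-binding point through a right-limit of sequences in $\{z : p(z) = zx(z)\}$, and that the second branch of the upward-IC disjunction, if activated somewhere in $(\bar z, v]$, genuinely produces a strictly larger IR-binding point rather than merely approaching $\bar z$ from above. These technicalities are handled by standard right-limit arguments that parallel the structure already used in the DSIC verification of step~(2).
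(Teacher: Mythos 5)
Your proof is correct in its essentials, but it takes a genuinely different route from the paper's. The paper argues \emph{locally}: it sandwiches the payment increment over an infinitesimal interval $[v, v+\delta]$ using the two IC inequalities, obtaining the differential relation $p'(v) = M v\, x'(v)$ whenever the IR condition is slack, and observing that whenever IR blocks the upward deviation the payment is instead pinned to the first price $p(v+\delta) = (v+\delta)x(v+\delta)$; either way the forward increment from $p(0)=0$ is uniquely determined, which proves uniqueness without ever naming the limit object. You argue \emph{globally}: you show that any DSIC, IR payment is squeezed between the closed-form expression $A(v) = M\myerson(v) - \max_{0\le z\le v}\{M\myerson(z)-zx(z)\}$ from both sides --- the upper bound from telescoping the downward IC inequality (which, as you correctly note, is never obstructed by the ROI/IR condition) together with IR at the argmax $z^*$, and the lower bound from the upward IC disjunction on the interval above the last IR-binding point $\bar z$ together with the optimality of $z^*$. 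Your route buys something the paper's does not: it re-derives the explicit formula~\eqref{eq:payment} directly from DSIC and IR, tying step~(3) to step~(2) rather than leaving the formula as a verified guess. The price is the bookkeeping around $\bar z$, and here one correction: the delicate point is the \emph{left} approach to $\bar z$ (the supremum of the IR-binding set is approached from below), so the paper's right-continuity convention for $x$ does not by itself place $\bar z$ in that set; if $x$ jumps at $\bar z$ you instead need to show $p(\bar z^+) = \bar z x(\bar z)$, which follows from combining IR with the upward IC inequality from points just below $\bar z$ to points just above (the Myerson-increment branch would force $p(\bar z^+) > \bar z x(\bar z)$, violating IR, so the first-price branch must bind in the limit). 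With that patch, and using the paper's standing assumption that $x$ has finitely many non-differentiable points, your argument closes; its level of rigor on limits is then comparable to the paper's own.
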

\begin{proof}
    For any bidder $i$ and any $v_i \in [0, v_{max}]$. We consider two valuations $v_i$ and $w_i = v_i + \delta$ for some sufficiently small $\delta > 0$.
    Following the payment sandwich inequality techniques used in the original Myerson's proof, we have
    $$u(v_i, v_i) \geq u(w_i, v_i) \textrm{ and } u(w_i, w_i) \geq u(w_i, v_i).$$
    This gives us two inequalities that can together sandwich the payment $p_i(v)$.

    If we follow Myerson's original analysis and replace each utility with the value minus the payment for the bidder, these two inequalities would give us
    \[M_i v_i (x_i(w_i)-x_i(v_i)) \leq p_i(w_i) - p_i(v_i) \leq M_i w_i (x_i(w_i)-x_i(v_i)).\]
    In the limit, as $\delta$ approaches 0, we can divide each side by $\delta$ and get
    \begin{equation}
        \label{eq:price_delta}
        \frac{\diff p_i(v_i)}{\diff v_i} = M_i v_i \frac{\diff x_i(v_i)}{\diff v_i}.
    \end{equation}
    The uniqueness of the payment rule then follows by integrating \eqref{eq:price_delta} from $0$ to each value $v$. However, this argument has a problem in the ROI-constrained setting: because of the extra factor $M_i$ in the payment function due to the ROI condition, it is possible that at some point the payment grows to be larger than the bidder's obtained value, therefore violating the IR condition. 


    To deal with this issue, we first note that $p(v_i) \leq v_i \cdot x_i(v_i) \leq w_i \cdot x_i(v_i)$. This means when bidder $i$ has value $w_i$ and misreports her value as $v_i$, her payment will never violate the IR condition. This means we only need to discuss the case when the bidder $i$ with value $v_i$ misreports to $w_i$. There are two cases to discuss.
    \begin{itemize}
        \item If there exists a neighborhood of $v_i$, such that by applying \eqref{eq:price_delta} to $p_i(v_i)$ to get $p_i(w_i)$, we have $p_i(w_i) \leq v_i \cdot x_i(w_i)$. This means in this small neighborhood, the other direction of misreporting will also not be affected by the IR condition. Myerson's analysis can go through, and the derivative of $p_i(\cdot)$ remains fixed and unique at point $v_i$.
        \item If for any small neighborhood of $v_i$, applying~\eqref{eq:price_delta} gives $p_i(w_i) > v_i \cdot x_i(w_i)$. This means bidder $i$ with value $v_i$ would not want to misreport her bid as $w_i$ because it would violate the IR condition for her. Here we notice
        $v_i x_i(w_i) < p_i(w_i) \leq w_i x_i(w_i)$
        must hold. When $\delta$ is sufficiently small, it implies that we must have $p_i(w_i) = w_i x_i(w_i)$.
        This suggests that even though we can no longer apply equation~\eqref{eq:price_delta} in this case. $p_i(w_i)$ is still a unique and fixed value.
    \end{itemize}
    Combining the two cases together, we know that at each point $v \in [0, v_{max}]$, $p_i(v)$ is always unique, therefore proving this claim. 
\end{proof}

\section{Proof of Lemma~\ref{lem:pricing_opt}}
\label{appen:lem_pricing_opt}

\begin{proof}
    We prove this statement by contradiction. Suppose the claim is not true at valuation $v^*$, then we have $v^* = \argmax_{0 \leq z \leq v^*} \{M \myerson(z) - z x(z)\}$ (if there are multiple such $v^*$, choose the smallest one). 
    This means for all $0 \leq v < v^*$, the difference $d(v, v^*)$ between $M \myerson(v^*) - v^* x(v^*)$ and $M \myerson(v) - v x(v)$ is positive:
    \begin{equation}
    \label{eq:d}
        \begin{aligned}
            d(v, v^*) & = (M \myerson(v^*) - v^* x(v^*)) - (M \myerson(v) - v x(v))  \\
            & = M \left(v^* x(v^*) - v x(v) - \int_{v}^{v^*} x(z) \diff z \right) - (v^* x(v^*) - vx(v)) \\
            & = (M-1) (v^* x(v^*) - vx(v)) - M \int_{v}^{v^*} x(z) \diff z >0.
        \end{aligned}
    \end{equation}
    
    Given allocation $x(\cdot)$, our plan is to construct a new monotone allocation rule, which together with the corresponding payment rule can generate a higher revenue. More specifically, we select
    a sufficiently small $\delta>0$ such that letting $d(v, v^*) = 0$ by increasing the allocation $x(v)$ in the interval $(v^* - \delta, v^*)$ does not break the monotonicity of the allocation rule.
    Then, the new allocation rule $\bar{x}(\cdot)$ is defined as 
    \[
    \bar{x}(v) = \left\{
    \begin{array}{lr}
        x(v) & \text{if } v \leq v^* - \delta \quad \text{or} \quad v \geq v^* \\
        \displaystyle \left(\frac{v}{v^*}\right)^{\frac{1}{M-1}}x(v^*) & \text{if } v \in (v^*-\delta, v^*).
    \end{array}
    \right.
    \]

    This way, for all values $v \in (v^*-\delta, v^*)$, with respect to $\bar{x}(\cdot)$, we have
    \begin{equation}
    \label{eq50}
        \begin{aligned}
            d(v, v^*)  = & (M-1) (v^* \bar{x}(v^*) - v\bar{x}(v)) - M \int_{v}^{v^*} \bar{x}(z) \diff z  \\
            =&  (M-1) \left(v^* x(v^*) - v \left(\frac{v}{v^*}\right)^{\frac{1}{M-1}}x(v^*)\right) - M \int_{v}^{v^*} \left(\frac{z}{v^*}\right)^{\frac{1}{M-1}}x(v^*) \diff z \\
            =&  x(v^*) (M-1) \left(v^* - v \left(\frac{v}{v^*}\right)^{\frac{1}{M-1}}\right) - x(v^*) M \int_{v}^{v^*} \left(\frac{z}{v^*}\right)^{\frac{1}{M-1}} \diff z \\
            =&  x(v^*) (M-1) \left(\frac{1}{v^*}\right)^{\frac{1}{M-1}}\left(v^{*\frac{M}{M-1}} - v^{\frac{M}{M-1}}\right) \\
            &- x(v^*) M \left(\frac{1}{v^*}\right)^{\frac{1}{M-1}} \int_{v}^{v^*} z^{\frac{1}{M-1}} \diff z \\
             =& 0.
        \end{aligned}
    \end{equation} 
    Then, we prove that $\bar{x}(v) > x(v)$ for all values $v \in (v^* - \delta, v)$. Assume by contradiction that for any $\delta > 0$, there exists some $v \in (v^*-\delta, v^*)$ such that $\bar{x}(v) \leq x(v)$. Because we have assumed that both $\bar{x}(\cdot)$ and $x(\cdot)$ are right-differentiable and only have finite non-differentiable points, this means there must exist some $\delta' > 0$, such that $\bar{x}(v) \leq x(v)$ holds for all $v \in (v^*-\delta', v^*)$.
    Pick an arbitrary $v$ in this interval. We then have 
    \begin{equation*}
    \begin{aligned}
        d(v, v^*) & = (M-1) (v^* x(v^*) - vx(v)) - M \int_{v}^{v^*} x(z) \diff z \\
        & \leq (M-1) (v^* x(v^*) - v\bar{x}(v)) - M \int_{v}^{v^*} \bar{x}(z) \diff z \\
        & = (M-1) (v^* \bar{x}(v^*) - v\bar{x}(v)) - M \int_{v}^{v^*} \bar{x}(z) \diff z \\
        & =0,
    \end{aligned}        
    \end{equation*}
    where the first equality is by Equation~(\ref{eq:d}) and the last equality is by Equation~(\ref{eq50}). This is a direct contradiction to our previous claim that $d(v, v^*) > 0$ for all $v < v^*$.
    Therefore, we obtain that $\bar{x}(\cdot)$ remains non-decreasing because we only modify $x$ in the interval of $(v^* - \delta, v^*)$ and we have: (1) $\bar{x}(v) > x(v)$ for all values $v \in (v^* - \delta, v^*)$; (2) $\bar{x}(v)$ is increasing in $(v^* - \delta, v^*)$; and (3) $\bar{x}(v)$ is continuous at $v^*$. 

    Finally, we analyze the revenue generated by $\bar{x}(\cdot)$. Let $q(\cdot)$ be the corresponding payment rule of $\bar{x}(\cdot)$ derived from Theorem~\ref{thm:characterization}. We compare $q(v)$ and $p(v)$ at each value $v$.
    \begin{itemize}
        \item When $v \leq v^* - \delta$, we have $q(v) = p(v)$. This is because the allocation remains unchanged in the interval $[0, v^*-\delta)$, and the payment of a bid at value $v$ only depends on the allocation at interval $[0, v]$.
        \item When $v \in (v^* - \delta, v^*)$, we always have $v \in \argmax_{0 \leq z \leq v} \{M \myerson(z) - z \bar{x}(z)\}$, which means the payment of $q(v)$ reduces to the first price, and we have $q(v) = v \bar{x}(v) > v x(v) \geq p(v)$.
        \item When $v \geq v^*$, we again have $q(v) = p(v)$. This is because when $v \geq v^*$, we have $v^* \in \argmax_{0 \leq z \leq v} \{M \myerson(z) - z x(z)\}$ for both $(x, p)$ and $(\bar{x}, q)$. Then the payment formulation of~\eqref{eq:payment} reduces to 
        \[
        p(v) = M\myerson(v) - (M\myerson(v^*) - v^* x(v^*)) = v^* x(v^*) + M(\myerson(v) - \myerson(v^*)).
        \]
        This payment only depends on the allocation at interval $[v^*, v]$, which again $x(\cdot)$ and $\bar{x}(\cdot)$ agree on.
    \end{itemize}
    Combining these cases together, we have $\int_{v}q(v)f(v) \diff v> \int_v p(v)f(v) \diff v$. That is, the new mechanism $(\bar{x}, q)$ has a higher revenue compared to $(x, p)$. 
    This is a contradiction, and this completes the proof.
\end{proof}

\section{Proof of Lemma~\ref{lem:TwoPossibility}}
\label{appen:lem_TwoPossibility}

\begin{proof}
Assume that in some revenue-maximizing auction $(x, p)$, there exists a value $v$ such that the two conditions claimed in the lemma do not hold. That is, we have $p(v) < v x(v)$, and $x'(v) > 0$ or $x'(v)$ does not exist.
Since $x(\cdot)$ is right-differentiable with finite non-differentiable points, by Theorem~\ref{thm:characterization}, we have $p(\cdot)$  is also continuous at all but a finite number of points in its domains. This means we can always find an interval
$[\underline{v},\bar{v}]$, such that $p(v) < v x(v)$ and $x'_i(v)> 0$ for all valuations $v \in [\underline{v},\bar{v}]$.

Recall that when $p(v) < v x(v)$ for all $v$, by Lemma~\ref{lem:pricing_opt}, the payment reduces to $p(v) = M \myerson(v)$ and we can write the revenue of the mechanism as $\rev = M \int_{0}^{v_{\mmax}} \phi(z) x(z) f(z) \diff z$.
Next, we look at the virtual values $\phi(v)$ within this interval $[\underline{v},\bar{v}]$. Our plan is to modify the allocation $x(v)$ in (a subinterval of) this interval based on the sign of $\phi(v)$ while maintaining the monotonicity of $x(\cdot)$ and $p(v) < v x(v)$ in the interval, and the expected revenue will (weakly) increase.

We consider the following three cases:
\begin{itemize}
    \item There exists an interval $[a,b] \subseteq [\underline{v},\bar{v}]$ such that $\phi(v)>0, \forall v\in[a,b]$. In this case, we define 
    \[
    \bar{x}(v) = \left\{
    \begin{array}{lr}
        x(v) & \text{if } v \leq a \quad \text{or} \quad v \geq b \\
        \displaystyle x(v) + \delta \cdot \frac{b-v}{b-a} & \text{if } v \in (a, b),
    \end{array}
    \right.
    \]
    where $\delta > 0$ is sufficiently small such that:
    \begin{enumerate}
        \item $\bar{x}(\cdot)$ is still non-decreasing; and
        \item $p(v) < v \bar{x}(v)$ still holds in the interval $[\underline{v}, \bar{v}]$, which means the corresponding payment is still $p(v) = M \myerson(v)$ in this interval.
    \end{enumerate}
    Conditions (1) and (2) imply we can still write the expected revenue as 
    \begin{equation*}
        \rev = M \int_{0}^{v_{\mmax}} \phi(z) \bar{x}(z) f(z) \diff z
    \end{equation*}
    for the new mechanism with allocation $\bar{x}$.
    In the meanwhile, $\bar{x}(v)$ is point-wise larger than $x(v)$ at all values $v \in [a, b]$ where $\phi(v)$ is always positive, and outside this interval the two allocations remain the same. This means $\bar{x}(\cdot)$, together with its corresponding payment rule, would yield strictly higher revenue than the previous mechanism. A contradiction.
  
    \item There exists an interval $[a,b]$ in  $[\underline{v},\bar{v}]$ such that $\phi(v)<0, \forall v\in[a,b]$. Similar to the first case, we define 
    \[
    \bar{x}(v) = \left\{
    \begin{array}{lr}
        x(v) & \quad  \text{if } v \leq a \quad \text{or} \quad v \geq b \\
        \displaystyle x(v) - \delta \cdot \frac{v-a}{b-a} & \quad \text{if } v \in (a, b),
    \end{array}
    \right.
    \]
    where $\delta>0$ is sufficiently small such that $\bar{x}(\cdot)$ is still non-decreasing and $p(v) < v x(v)$ still holds in the interval $[\underline{v}, \bar{v}]$.
    Using the same argument as in the previous case, we can again argue that $\bar{x}(\cdot)$ gives a higher revenue than $x(\cdot)$. Again a contradiction.
    \item If neither of the first two cases happens, we must have $\phi(v)=0, \forall v\in [\underline{v}, \bar{v}]$. In this case, as long as the monotonicity of $x(\cdot)$ and $p(v) \leq v x(v)$ is maintained in the interval, any modification of $x(\cdot)$ would generate the same revenue. Therefore we can always find an allocation $\bar{x}(\cdot)$ that satisfies one of the required two conditions and have the same revenue as that of $x(\cdot)$. Therefore $\bar{x}(\cdot)$ still gives a revenue-maximizing auction. 
\end{itemize}
This concludes the proof.
\end{proof}

\end{document}